\newcommand{\etal}{{et al.}}
 \newcommand{\joachim}[1]{\textcolor{purple}{JG: #1}}
\newcommand{\R}{\mathbb{R}}
\def\polylog{\operatorname{polylog}}
\DeclareMathOperator*{\argmax}{arg\,max}
\theoremstyle{definition}
\title{Approximating the packedness of polygonal curves} 
\titlerunning{Packedness of curves}
\author{Joachim Gudmundsson}{The University of Sydney, Australia}
\author{Yuan Sha}{The University of Sydney, Australia}
\author{Sampson Wong}{The University of Sydney, Australia}
\authorrunning{Gudmundsson et al.} 
\keywords{Computational geometry, trajectories, realistic input models}
\begin{document}
\maketitle

\begin{abstract}
    In 2012 Driemel \etal~\cite{DBLP:journals/dcg/DriemelHW12} introduced the concept of $c$-packed curves as a realistic input model. In the case when $c$ is a constant they gave a near linear time $(1+\varepsilon)$-approximation algorithm for computing the Fr\'echet distance between two $c$-packed polygonal curves. Since then a number of papers have used the model. 

    In this paper we consider the problem of computing the smallest $c$ for which a given polygonal curve in $\R^d$ is $c$-packed. We present two approximation algorithms. The first algorithm is a $2$-approximation algorithm and runs in $O(dn^2 \log n)$ time. In the case $d=2$ we develop a faster algorithm that returns a $(6+\varepsilon)$-approximation and runs in $O((n/\varepsilon^3)^{4/3} \polylog (n/\varepsilon)))$ time.
    
    We also implemented the first algorithm and computed the approximate packedness-value for 16 sets of real-world trajectories. The experiments indicate that the notion of $c$-packedness is a useful realistic input model for many curves and trajectories.  
    \end{abstract}

\section{Introduction}
Worst-case analysis often fails to accurately estimate the performance of an algorithm for real-world data. One reason for this is that the traditional analysis of algorithms and data structures is only done in terms of the number of elementary objects in the input; it does not take into account their distribution. Problems with traditional analysis have led researchers to analyse algorithms under certain assumptions on the input~\cite{bsvk-rimga-02}, which are often satisfied in practice. By doing this, complicated hypothetical inputs are hopefully precluded, and the worst-case analysis yields bounds which better reflect the behaviour of the algorithms in practical situations. 

In computational geometry, realistic input models were introduced by van der Stappen and Overmars~\cite{so-mpafo-94} in 1994. They studied motion planning among fat obstacles. Since then a range of models have been proposed, including uncluttered scenes~\cite{b-lsbsp-00}, low density~\cite{sobv-mpelo-98}, simple-cover complexity~\cite{mms-qsrs-97}, to name a few. De Berg et al.~\cite{bsvk-rimga-02} gave 
%several 
algorithms for computing the model parameters for planar polygonal scenes. In their paper they motivated why such algorithms are important. 
\begin{itemize}
\item To verify whether a certain model is appropriate for a certain application domain.
\item Some algorithms require the value of the model parameter as input in order to work correctly, e.g. the range searching data structure for fat objects developed by Overmars and van der Stappen~\cite{os-rsplaf-96}.
\item Computing the model parameters of a given input can be useful for selecting the algorithm best tailored to that specific input.
\end{itemize}

%\begin{definition}\cite{frechet}
%Let $V$ be a Euclidean vector space. A curve is a continuous mapping $f:[a,b]\to V$ with $a,b\in\R$ and $a < b$. A polygonal curve is a curve $P:[0,n]\to V$ with $n\in \mathbb{N}$, such that for all $i\in {0,1,...,n-1}$, $P_{|[i,i+1]}$ is affine, i.e. $P(i+\lambda)=(1-\lambda)P(i)+\lambda P(i+1)$ for all $\lambda \in [0,1]$. $P_{|[i,i+1]}$ is part of $P$ in $[i,i+1]$.
%\end{definition}

In this paper we will study polygonal curves in $\R^d$. The Fr\'{e}chet distance~\cite{frechet} is probably the most popular distance measure for curves. In 1995, Alt and Godau~\cite{F} presented an $O(n^2 \log n)$ time algorithm for computing the Fr\'{e}chet distance between two polygonal curves of complexity $n$. This was later improved by Buchin et al.~\cite{bbmw-fswd-17} who showed that the continuous Fr\'{e}chet distance can be computed in $O(n^2 \sqrt{\log n}(\log \log n)^{3/2})$ expected time. Any attempt to find a much faster algorithm was proven to be futile when Bringmann~\cite{b-wwdtt-14} showed that, assuming the Strong Exponential Time Hypothesis, the Fr\'{e}chet distance cannot be computed in strongly subquadratic time, i.e., in time $O(n^{2-\varepsilon})$ for any $\varepsilon > 0$. 

In an attempt to break the quadratic lower bound for realistic curves, Driemel \etal~\cite{DBLP:journals/dcg/DriemelHW12} introduced a new family of realistic curves, so-called $c$-packed curves, which since then has gained considerable attention~\cite{cdgnw-ammrf-11,dh-jydp-13,dk-pefd-18,gs-faafm-15,hr-fdre-14}. A curve $\pi$ is $c$-packed if for any ball $B$, the length of the portion of $\pi$ contained in $B$ is at most $c$ times the radius of $B$.
In their paper they considered the problem of computing the Fr\'{e}chet distance between two $c$-packed curves and presented a $(1+\varepsilon)$-approximation algorithm with running time $O(\frac{cn}{\varepsilon} + cn \log n)$, which was later improved to $O(\frac{cn}{\sqrt{\varepsilon}} \log^2 (1/\varepsilon) + cn \log n)$ by Bringmann and K\"{u}nnemann~\cite{bk-iafdc-17}. 

Other models for realistic curves have also been studied. Closely related to $c$-packedness is $\gamma$-density which was introduced by Van der Stappen et al.~\cite{sobv-mpelo-98} for obstacles, and modified to polygonal curves in~\cite{DBLP:journals/dcg/DriemelHW12}.
A set of objects is $\gamma$-low-density, if for any ball of any radius, the number of objects intersecting the ball that are larger than the ball is less than $\gamma$. Aronov et al.~\cite{ahkww-fdcr-06} studied so-called backbone curves, which are used to model protein backbones in molecular biology. Backbone curves are required to have, roughly, unit edge length and a given minimal distance between any pair of vertices. Alt et al.~\cite{ akw-cdmp-04} introduced $\kappa$ straight curves, which are curves where the arc length between any two points on the curve is at most a constant $\kappa$ times their Euclidean distance. They also introduced $\kappa$-bounded curves which is a generalization of $\kappa$-straight curves. It has been shown~\cite{akklmss-cdsrp-08} that one can decide in $O(n \log n)$ time whether a given curve is backbone, $\kappa$-straight or $\kappa$-bounded. 

From the above discussion and the fact that the $c$-packed model has gained in popularity, we study two natural and important questions in this paper.
\begin{enumerate}
    \item Given a curve $\pi$, how fast can one (approximately) decide the smallest $c$ for which $\pi$ is $c$-packed?  
    \item Are real-world trajectory data $c$-packed for some reasonable value of $c$? 
\end{enumerate}

Vigneron~\cite{DBLP:journals/talg/Vigneron14} gave an FPTAS for optimizing the sum of algebraic functions. The algorithm can be applied to compute a $(1+\varepsilon)$ approximation of the $c$-packedness value of a polygonal curve in $\R^d$ in $O((\frac{n}{\varepsilon})^{d+2}\log^{d+2}\frac{n}{\varepsilon})$ time.

However, working with balls is complicated (see Section~\ref{ssec:preliminaries}) and in this paper we will therefore consider a simplified version of $c$-packedness. Instead of balls we will use ($d$-)cubes, that is, we say that a curve $\pi$ is $c$-packed if for any cube $S$, the length of the portion of $\pi$ contained in $S$ is at most $c \cdot r$, where $r$ is half the side length of $S$. Note that under this definition, a $c$-packed curve using the ``ball'' definition is a $(\sqrt{d}c)$-packed curve in the ``cube''' definition, while a $c$-packed curve using the ``cube'' definition is also a $c$-packed curve in the ``ball'' definition. From now on we will use the ``cube'' definition of $c$-packed curves. 

To the best of our knowledge the only known algorithm for computing packedness of a polygonal curve, apart from applying the tool by Vigneron~\cite{DBLP:journals/talg/Vigneron14}, is by Gudmundsson~\etal~\cite{gks-ahct-13} who gave a cubic time algorithm for polygonal curves in $\R^2$. They consider the problem of computing ``hotspots'' for a given polygonal curve, but their algorithm can also compute the packedness of a polygonal curve. We provide two sub-cubic time approximation algorithms for the packedness of a polygonal curve.

Our first result is a simple $O(dn^2 \log n)$ time $2$-approximation algorithm for $d$-dimensional polygonal curves. We also implemented this algorithm and tested it on 16 data sets to estimate the packedness value for real-world trajectory data. As expected the value varies wildly both between different data sets but also within the same data set. However, about half the data sets had an average packedness value less than 10, which indicates that $c$-packedness is a useful and realistic model for many real-world data sets.   

\begin{comment}
Our second result is a faster $O^*(n^{4/3})$ time\footnote{The $O^*$-notation omits $\polylog$ and $1/\varepsilon$ factors. } $(6+\varepsilon)$-approximation algorithm for polygonal curves in the plane. We achieve this faster algorithm by applying Callahan and Kosaraju's Well-Separated Pair Decomposition (WSPD) to select $O(n)$ squares, and then approximating the packedness values of these squares with a multi-level data structure. Note that our approach of building a data structure and then performing a linear number of square packedness queries solves a generalised instance of Hopcroft's problem. Hence, it is unlikely that our approach, or a similar approach, can lead to a considerably faster algorithm. 

Hopcroft's problem asks: Given a set of $n$ points and $n$ lines in the plane, does any point lie on a line? An $\Omega(n^{4/3})$ lower bound for Hopcroft's problem was given by Erickson~\cite{DBLP:conf/cccg/Erickson95}. 
The reduction is straight-forward. Given an instance of $n$ points and $n$ lines to the Hopcroft's problem, we let $\mathcal S$ be infinitely small squares, i.e. the $n$ points, and then let $\pi$ be the concatenation of the $n$ lines. Reductions involving infinitesimals were proven to be formally correct by Erickson~\cite{DBLP:conf/cccg/Erickson95}.
\end{comment} 

%SW: alternative. 
Our second result is a faster $O^*(n^{4/3})$ time\footnote{The $O^*$-notation omits $\polylog$ and $1/\varepsilon$ factors. } $(6+\varepsilon)$-approximation algorithm for polygonal curves in the plane. We achieve this faster algorithm by applying Callahan and Kosaraju's Well-Separated Pair Decomposition (WSPD) to select $O(n)$ squares, and then approximating the packedness values of these squares with a multi-level data structure. Note that our approach of building a data structure and then performing a linear number of square packedness queries solves a generalised instance of Hopcroft's problem. Hopcroft's problem asks: Given a set of $n$ points and $n$ lines in the plane, does any point lie on a line? An $\Omega(n^{4/3})$ lower bound for Hopcroft's problem was given by Erickson~\cite{DBLP:conf/cccg/Erickson95}. Hence, it is unlikely that our approach, or a similar approach, can lead to a considerably faster algorithm. 

\subsection{Preliminaries and our results} \label{ssec:preliminaries}
Let $\pi=\langle p_1, \ldots , p_n \rangle$ be a polygonal curve in $\R^d$ and let $s_i=(p_i,p_{i+1})$ for $1\leqslant i< n$. Let $H$ be a closed convex region in $\R^d$. The function $\Upsilon(H) = \sum_{i=1}^{n-1} |s_i \cap H|$ describes the total length of the trajectory $\pi$ inside $H$. In the original definition of $c$-packedness $H$ is a ball. 

As mentioned in the introduction, we will consider $H$ to be an axis-aligned cube instead of a ball. The reason for our choice was argued for $\R^2$ in~\cite{gks-ahct-13}, and for completeness we include their arguments here.  

If $H$ is a square, then each piece of $\Upsilon(H)$ is a simple linear function, i.e. is of the form $\gamma(x) = ax + b$ for some $a,b \in \R$. The description of each piece of $\Upsilon$ is constant size and can be evaluated in constant time. However, if $H$ is a disc, the intersection points of the boundary of $H$ with the trajectory $\pi$ are no longer simple linear equations in terms of the center and radius of $H$, so that $\Upsilon$ becomes a piecewise the sum of square roots of polynomial functions. These square root functions provide algebraic issues that cannot be easily resolved for maximising the function $\Upsilon(H)/r$. For this reason, we will consider $H$ to be a square instead of a disc.

The function $\Upsilon(H) = \sum_{i=1}^{n-1} |s_i \cap H|$ describes the total length of the polygonal curve inside $H$. 
%This function is piecewise linear in $p$ and $r$ (Lemma~\ref{lemma:hyperbolic function}). The break points occur when: (i) a vertex of $\pi$ lies on a $(d-1)$-face of $H$, or (ii) a $(d-2)$-face of $H$ lies on an edge of $\pi$.
Similarly, $\Psi(H) = \Upsilon(H)/r$ denotes the \emph{packedness value} of $H$. Our aim is to find a cube $H^*$ with centre at $p^*$ and radius $r^*$ that has the maximum packedness value for a given polygonal curve $\pi$. The radius of a cube is half the side length of the cube.

The following two theorems summarise the main results of this paper.
\begin{theorem}\label{theorem:simpleAlgorithmThm}
%{old} Given a polygonal curve $\pi$ of size $n$ in $\R^2$, one can compute a $2$-approximate packedness value for $\pi$ in $O(n^2\log n)$ time.
 Given a polygonal curve $\pi$ of size $n$ in $\R^d$, one can compute a $2$-approximate packedness value for $\pi$ in $O(dn^2\log n)$ time.
\end{theorem}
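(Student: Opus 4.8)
The plan is to reduce the search over all axis-aligned cubes to a search over cubes \emph{centred at the vertices} $p_1,\dots,p_n$ of $\pi$, and to show that for each fixed centre the best radius can be found in near-linear time. The engine behind the approximation guarantee is a simple doubling observation: if a cube $H$ with centre $c$ and radius $r$ contains a vertex $p_i$ of $\pi$ (in its closed region), then the cube $\hat H$ centred at $p_i$ with radius $2r$ contains $H$, since any $q\in H$ satisfies $\|q-p_i\|_\infty \le \|q-c\|_\infty + \|c-p_i\|_\infty \le r+r = 2r$. Hence $\Upsilon(\hat H)\ge \Upsilon(H)$, and because the radius only doubled, $\Psi(\hat H)=\Upsilon(\hat H)/(2r)\ge \Upsilon(H)/(2r)=\Psi(H)/2$. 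Thus a single vertex-centred cube already recovers at least half the packedness of any cube that contains a vertex.

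First I would establish the structural lemma that the optimal cube $H^*$ (with centre $p^*$ and radius $r^*$) may be assumed to contain a vertex of $\pi$ on its boundary. Applying the doubling observation to the cube centred at that vertex with radius $2r^*$ then gives a vertex-centred cube of packedness at least $\Psi(H^*)/2$, which is exactly the guarantee we want. Next comes the algorithm. For each vertex $p_i$ I would build the function $\rho \mapsto \Upsilon_i(\rho)$, where $\Upsilon_i(\rho)=\Upsilon$ of the cube centred at $p_i$ of radius $\rho$. For a fixed segment $s_j$, the intersection length $|s_j \cap (p_i+\rho[-1,1]^d)|$ is a piecewise-linear function of $\rho$ with $O(d)$ breakpoints (computable in $O(d)$ time): the onset radius at which the growing cube first meets $s_j$, the radii at which it swallows each endpoint, and the coordinate-switch events along the chord. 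Summing over the $n-1$ segments, $\Upsilon_i$ is piecewise linear with $O(dn)$ breakpoints. On any linear piece $\Upsilon_i(\rho)=\alpha\rho+\beta$, so $\Upsilon_i(\rho)/\rho=\alpha+\beta/\rho$ is monotone in $\rho$; hence the maximum of $\Psi_i(\rho)=\Upsilon_i(\rho)/\rho$ is attained at a breakpoint and can be found by sorting the breakpoints and scanning, in $O(dn\log n)$ time. Taking the maximum over all $n$ vertices gives total time $O(dn^2\log n)$, and by the lemma the returned value is a $2$-approximation of $\max_H \Psi(H)$.

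The step I expect to be the main obstacle is the structural lemma, specifically the case in which $H^*$ contains no vertex of $\pi$ in its interior; then every segment meeting $H^*$ crosses it as a full chord, and no vertex is available to centre a cube on. My approach here would be a perturbation/sliding argument: fixing the centre $p^*$ and varying only the radius, $\Upsilon$ is piecewise linear, so $\Psi=\Upsilon/\rho$ is extremized at a breakpoint; I would argue that we can continuously move the radius (and, if needed, translate the centre) without decreasing $\Psi$ until the boundary of the cube meets a vertex, reducing to the boundary-vertex situation covered by the doubling observation. The delicate point is to rule out a locally optimal configuration whose only active events are chord/face-switch events with no vertex on the boundary; I would handle this by noting that on such a regime $\Psi$ is monotone or constant in $\rho$, so the radius can always be slid to the next vertex-contact event without loss, making the lemma --- and with it the factor $2$ --- go through.
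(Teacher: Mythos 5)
Your overall architecture is exactly the paper's: reduce to cubes centred at vertices of $\pi$ via a doubling argument (your $\|q-p_i\|_\infty\le 2r$ computation is the paper's Case~2 of Lemma~\ref{2ApprxAlgo}), observe that for a fixed centre $\Upsilon$ is piecewise linear in $\rho$ with $O(dn)$ breakpoints so that $\Psi=\alpha+\beta/\rho$ is extremised at breakpoints (Lemma~\ref{lemma:hyperbolic function} and Corollary~\ref{obs:extreme-radius}), and run a radius sweep per vertex for $O(dn^2\log n)$ total time (Lemma~\ref{lemma:fixed-centre}). You also correctly locate the one delicate step: the case where $H^*$ contains no vertex of $\pi$.

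However, your proposed resolution of that case has a genuine gap. You argue that between vertex-contact events ``$\Psi$ is monotone or constant in $\rho$, so the radius can always be slid to the next vertex-contact event without loss.'' Piecewise monotonicity of $\alpha+\beta/\rho$ only gives monotonicity \emph{on each piece}; it does not fix the direction of monotonicity, and the regime between two vertex events generally contains several pieces separated by onset and face-switch breakpoints, at which the sign of $\beta$ (hence the direction) could a priori flip. So ``slide to the next vertex event without loss'' does not follow from what you have written, and shrinking instead is not an option since the cube may lose the curve entirely before meeting a vertex. What is needed is the stronger geometric fact (the paper's Lemma~\ref{segIIMono}): for a single segment first met at an interior point and not yet containing an endpoint, the normalised contribution $|s\cap H^p_\rho|/\rho$ is non-decreasing in $\rho$ across \emph{all} face-switch breakpoints, up until a vertex of $s$ is swallowed. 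The paper proves this by rescaling $H^p_\rho$ to a unit cube: the chord keeps a fixed direction while its distance to the centre shrinks as $\rho$ grows, so its length inside the unit cube cannot decrease. Since in the no-vertex case every intersected segment is such a chord, summing over segments (new segments only add nonnegative contributions) shows $\Psi$ is genuinely non-decreasing until the boundary first meets a vertex, and your doubling step then finishes the argument. With this lemma supplied, the rest of your proof goes through and coincides with the paper's.
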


\begin{theorem}\label{theorem:complexAlgorithmThm}
Given a polygonal curve $\pi$ of size $n$ in $\R^2$ and a constant $\varepsilon$, with  $0<\varepsilon\leqslant1$, one can compute a $(6+\varepsilon)$-approximate packedness value for $\pi$ in %$O((n^{4/3}/\varepsilon^3)\log^{(w+5)/3} (n/\varepsilon))$ time, where $w$ is a constant less than $3.33$.
$O((n/\varepsilon^3)^{4/3}\polylog(n/\varepsilon))$
time.
\end{theorem}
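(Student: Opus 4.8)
The plan is to discretize the continuous search for $H^*$ into a near-linear family of candidate squares, evaluate an approximation of $\Psi$ on each, and return the best. First I would establish a structural fact that pins down the candidates. Fix a centre $c$ and let the radius $r$ grow; the captured length $\Upsilon$ as a function of $r$ is continuous and piecewise linear, with breakpoints exactly at the $\ell_\infty$-distances $\|c-p_k\|_\infty$ to the vertices of $\pi$ (these are the radii at which a vertex crosses the boundary). On each linear piece $\Upsilon(r)=\alpha r+\beta$, so $\Psi(r)=\Upsilon(r)/r=\alpha+\beta/r$ is monotone in $r$, and hence the maximum of $\Psi(c,\cdot)$ is attained at a breakpoint, i.e. at a radius equal to the distance from $c$ to some vertex. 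Applying this to the optimal centre $p^*$ shows there is a vertex $p_j$ with $\|p^*-p_j\|_\infty=r^*$, i.e. a vertex on the boundary of $H^*$. Consequently the square centred at $p_j$ of radius $2r^*$ contains $H^*$ (every $x\in H^*$ satisfies $\|x-p_j\|_\infty\le\|x-p^*\|_\infty+\|p^*-p_j\|_\infty\le 2r^*$), so it captures at least $\Upsilon(H^*)$ and has packedness at least $\Psi^*/2$. This reduces the optimum to the $O(n^2)$ vertex-centred squares whose radii are $\ell_\infty$ vertex-distances, which is essentially the basis of Theorem~\ref{theorem:simpleAlgorithmThm}.

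Next I would cut the $O(n^2)$ candidate pairs down to $O^*(n)$ with the WSPD. By the same monotonicity, the best square centred at a fixed vertex has radius equal to some $\ell_\infty$ vertex-distance from that vertex; and by the structural fact the best vertex-centred square already attains packedness at least $\Psi^*/2$. A WSPD of the vertex set with separation $\Theta(1/\varepsilon)$ produces $O^*(n)$ pairs $(A,B)$ whose representative separations approximate every pairwise $\ell_\infty$ distance to within $(1\pm\varepsilon)$. For each pair I emit squares centred at the representative vertices of $A$ and of $B$ at the $O(\varepsilon^{-1}\log\varepsilon^{-1})$ radii of a geometric $(1+\varepsilon)$-grid bracketing the pair's separation. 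Since the optimal vertex-centred radius is a vertex-distance, it is realised up to $(1+\varepsilon)$ by one emitted radius, so the best emitted square has packedness at least $\Psi^*/(2(1+\varepsilon))$ measured against exact length. The further degradation from $2$ to $6$ comes from replacing exact lengths by the constant-factor estimate described below; a careful accounting yields the stated $6+\varepsilon$. As every emitted square is genuine, its packedness never exceeds $\Psi^*$, giving the two-sided bound.

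The remaining and hardest component is evaluating $\Upsilon(S)$ approximately for all $O^*(n)$ candidate squares within the target time. I would split each segment's contribution to a query square $S$ into two parts. Segments lying entirely inside $S$ contribute their full length, and their total is an orthogonal range-aggregate (sum of lengths of segments contained in a box), answerable in $\polylog$ time per query after $O(n\polylog n)$ preprocessing by a multi-level range tree. The difficult part is the segments crossing the boundary of $S$, whose clipped lengths depend on where they meet it; computing these exactly in batch is precisely the source of the $n^{4/3}$ barrier, as the concatenation-of-lines reduction to Hopcroft's problem shows. Here I would trade exactness for a constant factor: overlay a grid of cell size $\Theta(\varepsilon r)$ at each scale and approximate the curve length inside $S$ by the number of grid cells of $S$ that $\pi$ enters times the cell size, which is correct up to a constant factor and a $(1\pm\varepsilon)$ term. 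Counting the curve–cell incidences over all queries is a batched incidence problem solvable with a multi-level, cutting-based partition-tree data structure in $O^*(n^{4/3})$ time; the $\varepsilon^{-3}$ factor in the statement is the bookkeeping cost of the grid resolution across the geometric range of scales. Matching Erickson's $\Omega(n^{4/3})$ lower bound, this is essentially optimal for the approach.

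The main obstacle I anticipate is this last data-structuring step: simultaneously handling the boundary-crossing segments, which carry all the algebraic and combinatorial difficulty, within $O^*(n^{4/3})$ in the batched setting, and ensuring that the grid-based length estimate loses only a bounded constant factor rather than degrading with the number of crossing segments, so that the end-to-end guarantee closes at exactly $6+\varepsilon$. The WSPD candidate generation and the monotonicity argument are comparatively routine; the interaction between the approximate length oracle and the final constant is where the careful work lies.
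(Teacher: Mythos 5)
There are two genuine gaps. First, your structural claim that the breakpoints of $r\mapsto\Upsilon(H^p_r)$ occur \emph{exactly} at the $\ell_\infty$-distances from the centre to the vertices of $\pi$ is false: as Lemma~\ref{lemma:hyperbolic function} records, a breakpoint also occurs whenever a \emph{corner} of the growing square crosses an edge of $\pi$ (the chord switches which pair of sides it joins, changing the slope of the linear piece). Consequently the maximum of $\psi_p(r)$ over $r$ need not be attained at a vertex-distance radius, so your conclusions that $H^*$ has a vertex on its boundary and that the best vertex-centred square has a vertex-distance radius do not follow. This is precisely the loss the paper pays for: Lemma~\ref{lemma:vertexOnBoundaryLemma} shows that between two consecutive vertex-distance radii $r_1<r_2$ one has $\psi_p(r_x)\leq\psi_p(r_1)+2\psi_p(r_2)\leq 3\max\{\psi_p(r_1),\psi_p(r_2)\}$, and the factor $6$ in the theorem is the product of this $3$ with the $2$ from vertex-centring (Lemma~\ref{2ApprxAlgo}); it does not come from the length oracle, which in the paper loses only a $(1+O(\varepsilon))$ factor. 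Your accounting, which gets $2(1+\varepsilon)$ from the candidate reduction and attributes the rest to the oracle, therefore cannot close at $6+\varepsilon$.

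Second, your length estimator for boundary-crossing segments --- counting grid cells of side $\Theta(\varepsilon r)$ that $\pi$ enters and multiplying by the cell size --- is not a bounded-factor approximation: a segment can clip a cell with arbitrarily small intersection length (so cell counts overestimate by an unbounded factor), and many segments can traverse one cell (so distinct-cell counts underestimate by a factor of the multiplicity). The paper avoids this by partitioning the spanning segments of each node into $\kappa_1=O(1/\varepsilon)$ angle classes and subdividing the relevant sides of each query square into $O(1/\varepsilon)$ subsegments, so that every segment of a class crossing a given subsegment has nearly the same clipped length; the resulting count-times-max-length value $M$ is then sandwiched as $|F(v)\cap S|\leq M\leq|F(v)\cap S^+|$ for a slightly inflated square $S^+$ (Lemmas~\ref{lemma:geometry} and~\ref{lemma:type-C-sub}), with the batched counting done via Agarwal's red--blue segment-intersection counting in $O^*(n^{4/3})$ time. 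Without an estimator of this sandwich type, your end-to-end guarantee does not hold.
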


%Our second algorithm can also be extended to approximate the density of a polygonal curve in $\R^2$, with a logarithmic factor increase in the running time.

Theorem~\ref{theorem:simpleAlgorithmThm} is presented in Section~\ref{sec:2-approximation} and 
%the algorithm for 
Theorem~\ref{theorem:complexAlgorithmThm} is presented in Section~\ref{sec:6-approximation}.
Experimental results on the packedness values for real world data sets are given in Section~\ref{sec:experiments}.

%%%%%%%%%%%%%%%%%%%%%%%%%%%%%%%%%%%%%%%%%%%%%%%%%%%%%%%%%%%%%%%%%%%%
%%%%%%%%%%%%%%%%%%%% Algorithm 1 %%%%%%%%%%%%%%%%%%%%%%%%%%%%%%%%%%%
%%%%%%%%%%%%%%%%%%%%%%%%%%%%%%%%%%%%%%%%%%%%%%%%%%%%%%%%%%%%%%%%%%%%

\section{A \texorpdfstring{$2$}{2}-approximation algorithm} \label{sec:2-approximation}
Given a polygonal curve $\pi$ in $\R^d$, let $H^*$ be a $d$-cube with centre at $p^*$ and radius $r^*$ that has a maximum packedness value. Our approximation algorithm builds on two observation. The first observation is that given a center $p \in \R^d$ one can in $O(dn \log n)$ time find, of all possible $d$-cubes centered at $p$, the $d$-cube that has the largest packedness value. The second observation is that there exists a $d$-cube centered at a vertex of $\pi$ that has a packedness value that is at least half the packedness value of $H^*$.

Before we present the algorithm we need some notations. Let $H^p_r$ be the $d$-cube $H$, scaled with $p$ as center and such that its radius is $r$. Fix a point $p$ in $\R^d$, and consider $\Psi$ as a function of $r$. More formally, let $\psi_p(r) = \Psi(H_r^p)$. Gudmundsson~\etal~\cite{gks-ahct-13} showed properties of $\psi_p(r)$ that we generalize to $\R^d$ and restate as:

\begin{lemma} \label{lemma:hyperbolic function}
 The function $\psi_p(r)$ is a piecewise hyperbolic function. The pieces of $\psi_p(r)$ are of the form $a(1/r)+ b$, for $a, b \in \R$, and the break points of $\psi_p(r)$ correspond to $d$-cubes $H$ where: (i) a
vertex of $\pi$ lies on a $(d-1)$-face of $H$, or (ii) a $(d-2)$-face (in $\R^3$, an edge) of $H$ intersects an edge of $\pi$.
\end{lemma}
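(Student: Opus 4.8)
The plan is to reduce everything to understanding $\Upsilon(H_r^p)=\sum_{i}|s_i\cap H_r^p|$ as a function of the single variable $r$ and then divide by $r$. Fix the centre $p$ and recall that $H_r^p=\{x\in\R^d:\ |x_j-p_j|\le r\text{ for every coordinate }j\}$. Parametrise the edge $s_i$ by $q(t)=p_i+t(p_{i+1}-p_i)$, $t\in[0,1]$. Then $q(t)\in H_r^p$ precisely when $t$ satisfies the $2d$ inequalities $p_j-r\le q_j(t)\le p_j+r$, each linear in $t$ and cutting out an interval whose endpoints are affine in $r$. Hence the feasible parameters form an interval $[t_i^-(r),t_i^+(r)]\cap[0,1]$, where $t_i^-(r)$ and $t_i^+(r)$ are continuous and piecewise affine in $r$. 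Since the length of the corresponding sub-edge is $|p_{i+1}-p_i|\cdot\max\!\big(0,\ \min(t_i^+(r),1)-\max(t_i^-(r),0)\big)$, the first step is to observe that $|s_i\cap H_r^p|$ is continuous and piecewise linear in $r$, equal to $\alpha r+\beta$ on each piece.

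Summing over the edges, $\Upsilon(H_r^p)$ is again continuous and piecewise linear in $r$, say $\Upsilon(H_r^p)=\alpha_k r+\beta_k$ on the $k$-th piece. Dividing by $r$ gives
\[
\psi_p(r)=\frac{\Upsilon(H_r^p)}{r}=\alpha_k+\frac{\beta_k}{r}=\beta_k\cdot\frac1r+\alpha_k,
\]
which is exactly a hyperbolic piece of the claimed form $a(1/r)+b$ with $a=\beta_k$ and $b=\alpha_k$. This settles the first two assertions; it also shows that every break point of $\psi_p$ is a break point of the piecewise-linear function $\Upsilon(H_r^p)$, hence of one of its summands $|s_i\cap H_r^p|$.

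The remaining, and main, task is to identify these break points geometrically. From the single-edge description, the affine formula for $|s_i\cap H_r^p|$ can change for only one of two reasons. First, one of the clamps $\max(t_i^-(r),0)$ or $\min(t_i^+(r),1)$ may switch, i.e.\ $t_i^\pm(r)$ crosses $0$ or $1$; this is exactly the instant an endpoint $p_i$ or $p_{i+1}$ of $s_i$---a vertex of $\pi$---lies on $\partial H_r^p$, which in general position means on a $(d-1)$-face, giving case (i). Second, the facet of the cube realising the crossing value $t_i^-(r)$ or $t_i^+(r)$ may change; this happens precisely when the crossing point of $s_i$ with $\partial H_r^p$ lies on the shared $(d-2)$-face of two adjacent facets, giving case (ii). A sub-case worth spelling out is when $s_i$ first meets (or last leaves) the cube at an interior point of the edge: there $t_i^-(r)=t_i^+(r)$, and since the contact occurs at a minimiser of the convex piecewise-linear function $t\mapsto\max_j|q_j(t)-p_j|$, in general position two coordinate terms tie at that minimiser, so the contact point again lies on a $(d-2)$-face; hence this event is also of type (ii).

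The \textbf{main obstacle} is precisely this break-point bookkeeping: one must verify that the two listed events are the \emph{only} ways a summand's linear piece can change, and handle the degenerate configurations (a vertex landing on a $(d-2)$-face or corner, an edge contained in a facet-hyperplane, or an edge meeting the boundary non-transversally). Each degeneracy occurs for only finitely many values of $r$ and can be absorbed into the set of break points by a standard general-position/limiting argument, without affecting the piecewise-hyperbolic conclusion. Assembling the per-edge break points over all $i$ then yields the global break points of $\psi_p$, of the two stated types.
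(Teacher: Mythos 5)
Your proof is correct, and it is worth noting that the paper itself gives no proof of this lemma at all: it is stated as a restatement and generalisation to $\R^d$ of properties shown by Gudmundsson et al.~\cite{gks-ahct-13}, so there is no in-paper argument to compare against. Your route --- parametrising each edge, writing $s_i\cap H_r^p$ as the clamped interval $[\max(t_i^-(r),0),\min(t_i^+(r),1)]$ with $t_i^-$ a maximum and $t_i^+$ a minimum of functions affine in $r$, concluding that each summand and hence $\Upsilon(H_r^p)$ is piecewise affine in $r$, and dividing by $r$ --- is the natural one and delivers the hyperbolic form $a(1/r)+b$ cleanly. Your classification of break points is also right: a change in the outer clamps corresponds to a vertex of $\pi$ reaching $\partial H_r^p$ (generically the relative interior of a $(d-1)$-face, case (i)), and a change in the active constraint defining $t_i^\pm(r)$ means two facet constraints are simultaneously tight at the crossing point, which therefore lies on a $(d-2)$-face (case (ii)); your observation that first contact at an interior point of an edge is a minimiser of the convex function $t\mapsto\max_j|q_j(t)-p_j|$ and hence generically a two-way tie, i.e.\ also a type (ii) event, is the one sub-case that is easy to overlook and you handle it correctly. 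The only residual looseness is the deferral of degeneracies (an edge parallel to a facet hyperplane, a vertex landing exactly on a corner or $(d-2)$-face) to a general-position argument, which is consistent with the level of rigour the paper itself adopts and does not affect the conclusion. In short: the proposal supplies a complete, self-contained proof where the paper relies on a citation.
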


As a corollary we get:
\begin{corollary} \label{obs:extreme-radius}
Let $r_1, r_2$ be the radii of two consecutive break points of $\psi_p(r)$, where $r_2 > r_1$. It holds that $\max_{r\in[r_1,r_2]} \psi_p(r) = \max \{\psi_p(r_1), \psi_p(r_2)\}$, that is, the maximum value is obtained either at $r_1$ or at $r_2$.
\end{corollary}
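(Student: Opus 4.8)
The plan is to read the corollary directly off the structural description of $\psi_p$ supplied by Lemma~\ref{lemma:hyperbolic function}. By that lemma, between two consecutive break points there are no further break points, so on the closed interval $[r_1,r_2]$ the function coincides with a single hyperbolic piece $\psi_p(r) = a/r + b$ for fixed constants $a,b \in \R$. The entire argument therefore reduces to understanding where such an elementary function is maximised on a closed interval of positive radii.

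First I would observe that a radius is always strictly positive, so $[r_1,r_2] \subset (0,\infty)$. On this domain the piece $g(r) = a/r + b$ is differentiable with $g'(r) = -a/r^2$. Since $r^2 > 0$ throughout, the sign of $g'$ is constant (equal to the sign of $-a$), so $g$ is monotone on $[r_1,r_2]$: non-increasing when $a \geqslant 0$ and non-decreasing when $a \leqslant 0$. A monotone function on a closed interval attains its maximum at one of the two endpoints, which yields exactly $\max_{r\in[r_1,r_2]} \psi_p(r) = \max\{\psi_p(r_1),\psi_p(r_2)\}$. In particular no interior maximum can arise, because $g'(r) = -a/r^2$ never vanishes for $r>0$ and finite $a$.

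The one point deserving a brief justification is that the hyperbolic formula remains valid up to the endpoints $r_1$ and $r_2$, not merely on the open interval. This holds because $\Upsilon(H_r^p)$, the length of $\pi$ inside the cube, varies continuously with $r$, so $\psi_p = \Upsilon(H_r^p)/r$ is continuous for $r>0$; the single hyperbolic branch defined on $(r_1,r_2)$ thus extends continuously to the closed interval, with endpoint values precisely $\psi_p(r_1)$ and $\psi_p(r_2)$. I do not expect any genuine obstacle here: the substance of the corollary is carried entirely by Lemma~\ref{lemma:hyperbolic function}, and the residual step is just the standard fact that a hyperbola $a/r+b$ is monotone away from the origin.
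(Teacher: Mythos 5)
Your proposal is correct and follows essentially the same route as the paper: invoke Lemma~\ref{lemma:hyperbolic function} to write the piece as $a(1/r)+b$, note the derivative $-a/r^2$ has constant sign on $[r_1,r_2]$, and conclude the maximum is attained at an endpoint by monotonicity. The extra remark about continuity at the endpoints is a harmless refinement the paper leaves implicit.
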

\begin{proof}
According to Lemma~\ref{lemma:hyperbolic function} the function $\psi_p(r)$ is a hyperbolic function in the range $r\in[r_1,r_2]$ of the form $a(1/r)+ b$ with the derivative $-a/r^2$. This implies that $\psi_p(r)$ is a monotonically decreasing or monotonically increasing function in $[r_1,r_2]$. As a result the maximum value of $\psi_p(r)$ is attained either at $r_1$ or at $r_2$.
\end{proof}

%%%%%%%%%%%%%% First part

Next we state the algorithm for the first observation. The general idea is to use plane-sweep, scaling $d$-cube $H$ with centre at $p$ by increasing its radius from $0$ to $\infty$. The $(d-1)$-faces of $H$ are bounded by $2d$ hyperplanes in $\R^d$. When $H$ expands from $p$, it can first meet a segment of $\pi$ in one of two ways: (i) a vertex of the segment lies on one of $H$'s $(d-1)$-face, or (ii) an interior point of the segment lies on a $(d-2)$-face of $H$. For the first case, the new segment can change to intersect a different $(d-1)$-face of $H$ at most $d-1$ times, depending on its relative position to the center of $H$ and its components in all $d$ dimensions. 

\begin{comment}
An illustration is shown in Figure~\ref{fig:CubeEvents}. 

\joachim{If we can't fit everything within the page limit we might remove the figure.}
\begin{figure}[htb]
    \centering
    \includegraphics[width=.8\textwidth]{}
    \caption{A case (i) segment and its component in $\R^3$}
    \label{fig:CubeEvents}
\end{figure}
\end{comment}

Similarly for a segment of the second case, it can change to intersect a different $(d-1)$-face of $H$ $O(d)$ times. Thus each segment has $O(d)$ event points and there are $O(dn)$ events in total. Sort the events by their radii $r_1,\ldots,r_m (m=O(dn))$ in increasing order. Perform the sweep by increasing the radius $r$ starting at $r=0$ and continue until all events have been encountered.

Recall that $\Upsilon(H)$ is the total length of the trajectory $\pi$ inside $H$. For each $r_i$, $1\leqslant i\leqslant m$, we can compute $\psi_p=\Upsilon(r_i)/r$ in time $O(dn)$. For two consecutive radii $r_i$ and $r_{i+1}$, $\Upsilon(H^p_{r_i})$ and $\Upsilon(H^p_{r_{i+1}})$ can differ in one of three ways. First, $H^p_{r_{i+1}}$ may include a vertex not in $H^p_{r_i}$, in which case the set of contributing edges may increase by up to two. Second, $H^p_{r_{i+1}}$ may intersect an edge not in $H^p_{r_{i}}$. Finally, an edge in $H^p_{r_i}$ may intersect a different $(d-1)$-face. We can compute a function $\Delta(r_i,r_{i+1})$ that describes these changes in constant time. We then have $\Upsilon(H^p_{r_{i+1}}) = \Upsilon(H^p_{r_i}) + \Delta(r_i,r_{i+1})$,
and we can compute $\Upsilon(H^p_{r_{i+1}})$ from $\Upsilon(H^p_{r_i})$ in constant time (in $\R^2$ similar to~\cite{msw-aotp-96}). Apart from sorting the event points, we compute $\psi_p(r)$ for every $r_i$, $1\leqslant i\leqslant m$, in $O(dn)$ time. We return radius $\argmax_{r_1\leqslant r_i\leqslant r_m}\psi_p(r_i)$ as the result. Hence, the total running time is $O(dn \log n)$.

Note that the break points of $\psi_p(r)$ are the event points. The correctness follows immediately from Corollary~\ref{obs:extreme-radius} which tells us that we only need to consider the set of event points. To summarise we get:

\begin{lemma} \label{lemma:fixed-centre}
Given a point $p$ in $\R^d$ one can in $O(dn \log n)$ time determine the radius $r>0$ such that $\Psi(H^p_r)=\max_{r'>0} \psi_p(r')$. 
\end{lemma}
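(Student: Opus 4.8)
The plan is to reduce the continuous optimization $\max_{r'>0}\psi_p(r')$ to a finite optimization over the break points of $\psi_p$, and then to evaluate $\psi_p$ at all break points efficiently via a radial plane-sweep. The reduction is immediate from the structural results already established: Lemma~\ref{lemma:hyperbolic function} tells us that $\psi_p(r)$ is piecewise of the form $a(1/r)+b$, and Corollary~\ref{obs:extreme-radius} tells us that on each maximal interval between consecutive break points the maximum of $\psi_p$ is attained at one of the two endpoints. Hence it suffices to identify every break point $r_1<\dots<r_m$ and return $\argmax_i \psi_p(r_i)$; no interior point of any piece can beat both of its endpoints.

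The next step is to enumerate and count the break points. By Lemma~\ref{lemma:hyperbolic function} each break point is caused either by a vertex of $\pi$ crossing a $(d-1)$-face of the growing cube $H^p_r$, or by an edge of $\pi$ crossing a $(d-2)$-face of $H^p_r$. As the cube grows from radius $0$, I would argue that each of the $n-1$ segments generates only $O(d)$ such events: when a segment first enters the cube it can subsequently switch which $(d-1)$-face it exits through at most $O(d)$ times, since each of the $d$ coordinate directions is responsible for only a bounded number of switches. This yields $m=O(dn)$ break points in total, which I would compute in $O(dn)$ time by testing each segment against the $2d$ bounding hyperplanes of the cube.

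I then sort the $m=O(dn)$ events by radius in $O(dn\log n)$ time, and sweep outward maintaining $\Upsilon(H^p_r)$ incrementally. The key point is that between consecutive events $r_i$ and $r_{i+1}$ the set of contributing edges changes in only one of the three ways listed earlier (a newly included vertex adds up to two edges, a newly intersected edge is added, or an existing edge switches to a different face), and the resulting change $\Delta(r_i,r_{i+1})$ is a constant-size linear expression, so $\Upsilon(H^p_{r_{i+1}})=\Upsilon(H^p_{r_i})+\Delta(r_i,r_{i+1})$ is computable in $O(1)$ time. Tracking $\psi_p(r_i)=\Upsilon(H^p_{r_i})/r_i$ and its running maximum during the sweep costs $O(1)$ per event after the initial $O(dn)$ setup, so the total time is dominated by the sort, giving $O(dn\log n)$.

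The step I expect to be the main obstacle is the combinatorial bound of $O(d)$ events per segment together with the precise form of the incremental update $\Delta$ in general dimension. In $\R^2$ this is the clean case handled in~\cite{gks-ahct-13,msw-aotp-96}, but in $\R^d$ one must carefully verify, for both vertex-on-face and edge-on-$(d-2)$-face events, that as the cube inflates a fixed segment can migrate between the $2d$ faces only a bounded number of times, and that each migration alters $\Upsilon$ by a single linear term maintainable in constant space and time. Getting this accounting right -- and confirming it hides no extra factor of $d$ or $\log n$ -- is what makes the $O(dn\log n)$ bound hold.
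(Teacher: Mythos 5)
Your proposal follows essentially the same route as the paper: reduce to the $O(dn)$ break points via Lemma~\ref{lemma:hyperbolic function} and Corollary~\ref{obs:extreme-radius}, sort them, and sweep outward maintaining $\Upsilon(H^p_r)$ with constant-time incremental updates $\Delta(r_i,r_{i+1})$, so the sort dominates at $O(dn\log n)$. The obstacle you flag --- rigorously bounding the number of face-migrations per segment by $O(d)$ in general dimension --- is likewise treated only informally in the paper, so your account matches its level of detail.
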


%%%%%%%%%%%%%% Second part

\begin{comment}

\begin{wrapfigure}{r}{4cm}
\includegraphics[width=4cm]{}\label{ratioContrib2}
\caption{A segment hit in the interior by a corner of the growing square.}
\end{wrapfigure}
\end{comment}

Now we are ready to prove the second observation.
%--------------------------------

\begin{lemma}\label{segIIMono}
%{\color{red} Old version:} Let $s_i$ be a segment that is hit in the interior When $H^p_r$ is scaled. $\frac{|s_i \cap H^p_r|}{r}$ increases monotonically before $H^p_r$ reaches a vertex of $s_i$.

%{\color{red} New version. Please check that it's correct!}
Consider the function $\psi_p(r)$ for a single segment $s$, i.e. $\psi_p(r)=\frac{|s \cap H^p_r|}{r}$. If the first point on $s$ encountered by $H$ is an interior point of $s$ then the function is non-decreasing from $r=0$ until $H^p_r$ encounters a vertex of $s$.     
\end{lemma}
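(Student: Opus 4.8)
The plan is to reduce the statement to a monotonicity property of chords of a centrally symmetric convex body under dilation. First I would observe that, in the range of radii under consideration, the segment may be replaced by the full line $\ell$ that contains $s$: since the first point of contact with $s$ is an interior point and we stop the instant $H^p_r$ reaches an endpoint (vertex) of $s$, for every $r$ in this range the intersection $s \cap H^p_r$ is a chord whose two endpoints lie in the relative interior of $s$, so $s \cap H^p_r = \ell \cap H^p_r$. Writing $L(r) := |\ell \cap H^p_r|$ we then have $\psi_p(r) = L(r)/r$, and $L(r_0)=0$ at the first-contact radius $r_0$, so it suffices to prove that $L(r)/r$ is non-decreasing for $r>r_0$.

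The central idea is a scaling argument. Let $\sigma_\lambda$ denote the dilation centred at $p$ with ratio $\lambda>1$, i.e.\ $\sigma_\lambda(x)=p+\lambda(x-p)$. Then $\sigma_\lambda$ maps $H^p_r$ onto $H^p_{\lambda r}$ and maps $\ell$ onto a line $\ell' := \sigma_\lambda(\ell)$ that is parallel to $\ell$ and whose distance from $p$ is exactly $\lambda$ times that of $\ell$. Because a dilation scales all lengths by $\lambda$, it carries the chord $\ell \cap H^p_r$ onto the chord $\ell' \cap H^p_{\lambda r}$, whence
\[
    |\ell' \cap H^p_{\lambda r}| = \lambda\,|\ell \cap H^p_r| = \lambda\,L(r).
\]
If I can show that the original line $\ell$ cuts a chord of $H^p_{\lambda r}$ at least as long as the one cut by $\ell'$, that is $|\ell \cap H^p_{\lambda r}| \ge |\ell' \cap H^p_{\lambda r}|$, then
\[
    \psi_p(\lambda r) = \frac{|\ell \cap H^p_{\lambda r}|}{\lambda r} \ge \frac{|\ell' \cap H^p_{\lambda r}|}{\lambda r} = \frac{\lambda L(r)}{\lambda r} = \frac{L(r)}{r} = \psi_p(r),
\]
and since $\lambda>1$ is arbitrary this is exactly the claimed monotonicity.

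The key step, and the main obstacle, is therefore the comparison $|\ell \cap H^p_{\lambda r}| \ge |\ell' \cap H^p_{\lambda r}|$ for the two parallel lines $\ell,\ell'$ inside the common cube $K := H^p_{\lambda r}$. I would isolate it as a general fact: for a centrally symmetric convex body $K$ with centre $p$ and a fixed direction $u$, the length of the chord that $K$ cuts out of a line parallel to $u$ is a non-increasing function of the distance of that line from $p$. This holds because, parametrising parallel lines by their offset $w$ in the hyperplane $u^{\perp}$ through $p$, the two endpoints of the chord are the upper and lower boundaries $\beta(w)$ and $\alpha(w)$ of the convex region $\{(w,t) : w+tu \in K\}$; here $\beta$ is concave and $\alpha$ is convex, so the chord length $g(w)=\beta(w)-\alpha(w)$ is concave, while central symmetry of $K$ makes $g$ even about $p$. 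A concave even function is maximised at the centre and non-increasing along every ray from it. Since $\ell'$ is obtained from $\ell$ by pushing its offset away from $p$ by the factor $\lambda>1$ (with $\ell'=\ell$, and hence equality, in the degenerate case where $\ell$ passes through $p$), we conclude $|\ell' \cap K| \le |\ell \cap K|$, which finishes the argument. As a consistency check, the same conclusion reads off Lemma~\ref{lemma:hyperbolic function}: on each hyperbolic piece $\psi_p(r)=a(1/r)+b$ the claim is equivalent to $a\le 0$, and the scaling inequality shows precisely that $\psi_p$ never decreases, so $a\le 0$ on every piece.
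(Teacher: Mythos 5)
Your proposal is correct and follows essentially the same route as the paper's proof: the paper normalises the cube to unit size (equivalently, your dilation $\sigma_\lambda$) so that the chord keeps a fixed direction while its distance to the centre $p$ decreases, and then invokes the fact that such a chord cannot get shorter as it approaches the centre. The only difference is that you actually justify that last step (concavity of the chord-length function in the offset plus central symmetry), which the paper asserts without proof, so your write-up is, if anything, more complete.
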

\begin{proof}
The function is zero until an interior point on $s$ is encountered. After encountering the interior point 
%on~$s$ 
and before encountering a vertex of~$s$, the segment $|s \cap H^p_r|$ is a chord between two boundary points of $H^p_r$. Suppose we normalise the size of the $d$-cube $H^p_r$ to be unit-sized. Then the length of the chord is normalised to $\frac{|s \cap H^p_r|}{r} = \psi_p(r)$. Before normalisation, the segment $|s \cap H^p_r|$ had fixed gradient, and had fixed orthogonal distance to the center $p$. After normalisation, the chord has fixed gradient and has decreasing distance to the center $p$. Therefore its length~$\psi_p(r)$ is non-decreasing as it approaches the diameter of~$H^p_r$.
\end{proof}

%proof 2
%The function is zero until $s$ is encountered. After that it strictly increases until s crosses a corner of $H^p_r$, when \frac{|s \cap H^p_r|}{r} stays constant until a vertex of s is encountered.

%proof 1
%\begin{proof}
% Let $r_0$ be the radius of $H^p_r$ when it hits $s_i$. $\frac{|s_i \cap H^p_r|}{r}=0$ when $r<r_0$. When $r\geqslant r_0$, we have $\frac{|s_i \cap H^p_r|}{r}=\frac{(\frac{1}{\sin\theta}+\frac{1}{\cos\theta})\cdot\Delta r}{r_0+\Delta r}$, which increases strictly before $s_i$ crosses a corner of $H^p_{r}$, where $\theta$ is the acute angle between $s_i$ and a vertical/horizontal line. After $s_i$ crosses a corner, $\frac{|s_i \cap H^p_r|}{r}$ stays constant. Thus $\frac{|s_i \cap H^p_r|}{r}$ increases monotonically before $H^p_r$ reaches a vertex of $s_i$. 
%\end{proof}
\begin{lemma}\label{2ApprxAlgo}
There exists a $d$-cube $H$ with center at a vertex of $\pi$ such that $\Psi(H) \geq \frac{1}{2} \cdot \Psi(H^*)$, where $H^*$ is the $d$-cube having the highest packedness value for $\pi$.
\end{lemma}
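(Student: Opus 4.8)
The plan is to relate the optimal cube $H^*$ (with centre $p^*$ and radius $r^*$) to a cube centred at a vertex of $\pi$ by a simple containment estimate in the $L_\infty$ metric. Concretely, if a point $v$ satisfies $\|v-p^*\|_\infty \le \rho$, then the cube $G$ of radius $2\rho$ centred at $v$ contains the cube of radius $\rho$ centred at $p^*$: for any $x$ in the latter, $\|x-v\|_\infty \le \|x-p^*\|_\infty + \|p^*-v\|_\infty \le 2\rho$. This is exactly where the factor $2$ comes from, since passing to $G$ can only enlarge the enclosed portion of $\pi$ (so $\Upsilon$ does not decrease) while the radius doubles, halving the packedness in the worst case.

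First I would split on whether some vertex of $\pi$ lies in $H^*$. If a vertex $v$ lies in $H^*$, then $\|v-p^*\|_\infty \le r^*$, so the cube $G$ centred at $v$ with radius $2r^*$ contains $H^*$. Hence $\Upsilon(G)\ge\Upsilon(H^*)$ and $\Psi(G)=\Upsilon(G)/(2r^*)\ge \tfrac12\,\Psi(H^*)$, which already proves the claim in this case.

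The harder case, and what I expect to be the main obstacle, is when \emph{no} vertex of $\pi$ lies in $H^*$: there is then no vertex available to centre the new cube. The idea is to first slide to an equally good cube that does touch a vertex, keeping the centre at $p^*$ and growing the radius from $r^*$. Let $r'$ be the first radius at which a vertex $v$ of $\pi$ reaches the boundary; this is finite because $\pi$ has finitely many vertices. The key point to verify is that Lemma~\ref{segIIMono} applies to every contributing segment on the range $[r^*,r']$: since no vertex is inside the cube for $r<r'$, the $L_\infty$-closest point of each such segment to $p^*$ cannot be an endpoint (an endpoint at distance $\le r^*$ would be a vertex inside $H^*$), so the first point each segment meets is interior. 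Lemma~\ref{segIIMono} then makes each single-segment packedness non-decreasing on $[r^*,r']$, and newly entering segments likewise enter at interior points and contribute non-negatively, so the sum $\psi_{p^*}(r)$ is non-decreasing there. Thus $\psi_{p^*}(r')\ge\psi_{p^*}(r^*)=\Psi(H^*)$, and by optimality of $H^*$ this forces $\psi_{p^*}(r')=\Psi(H^*)$.

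To finish, note that $v$ lies on the boundary of $H^{p^*}_{r'}$, so $\|v-p^*\|_\infty = r'$ and the containment estimate applies with $\rho=r'$: the cube $G$ centred at $v$ with radius $2r'$ contains $H^{p^*}_{r'}$, giving $\Psi(G)\ge \Upsilon(H^{p^*}_{r'})/(2r') = \tfrac12\,\Psi(H^*)$. Combining the two cases yields a vertex-centred cube of packedness at least $\tfrac12\,\Psi(H^*)$. The only delicate points are the finiteness of $r'$, the verification that the hypothesis of Lemma~\ref{segIIMono} genuinely holds for all contributing segments (so that the full sum, not just one term, is monotone), and checking that ``$v$ on the boundary'' gives exactly $\|v-p^*\|_\infty=r'$ so that the blow-up factor is precisely $2$.
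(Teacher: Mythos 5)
Your proof is correct and follows essentially the same route as the paper's: the same case split on whether $H^*$ contains a vertex, the same use of Lemma~\ref{segIIMono} to grow $H^*$ about $p^*$ until a vertex reaches the boundary, and the same radius-doubling containment argument giving the factor $\tfrac12$. Your additional verification that the hypothesis of Lemma~\ref{segIIMono} holds for every contributing segment (so the full sum $\psi_{p^*}$ is monotone) is a welcome elaboration of a step the paper states without detail, but it is not a different argument.
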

\begin{proof}
Consider $H^*$. We will construct a $d$-cube $H$ that is centered at a vertex $p$ of $\pi$ and contains $H^*$. We will then prove that $H$ has packedness value at least $\frac{1}{2} \cdot \Psi(H^*)$, which would prove the theorem. To construct $H$, we consider two cases: 

\noindent {\bf Case 1:} The square~$H^*$ does not contain a vertex of $\pi$, see Fig.~\ref{expandTheSquare}(a). Scale $H^*$ until its boundary hits a vertex. Let $H_1$ denote the $d$-cube obtained from the scaling and let $v$ be the vertex on the $(d-1)$-face of $H_1$. According to Lemma~\ref{segIIMono}, we know that $\Psi(H_1) \geq \Psi(H^*)$.

Let $H_2$ be the $d$-cube centered at $v$ with radius twice the radius of $H_1$, as illustrated in Fig.~\ref{expandTheSquare}(a). Clearly $H_2$ contains $H_1 \cap \pi$, so $\Psi(H_2) \geq \frac {1}{2} \Psi(H_1) \geq \frac {1}{2} \Psi(H^*)$, as required.

\begin{comment}
\begin{figure}[!htb]
\centering
\begin{subfigure}{.45\textwidth}
  \centering
  \includegraphics[width=.45\linewidth]{}
  \caption{Case 1}
  \label{fig:expand1}
\end{subfigure}
\begin{subfigure}{.45\textwidth}
  \centering
  \includegraphics[width=.5\linewidth]{}
  \caption{Case 2}
  \label{fig:expand2}
\end{subfigure}
\caption{Illustrating the two cases in the proof of Lemma~\ref{2ApprxAlgo}.}
\label{expandTheSquare}
\end{figure}
\end{comment}

\noindent {\bf Case 2:} The $d$-cube $H^*$ contains one or more vertices, see Fig.~\ref{expandTheSquare}(b). Let $v$ be a vertex inside $H^*$. Let $H_2$ be the $d$-cube with center at $v$ and radius twice that of $H^*$. Again, we have $H_2$ completely contains $H^*\cap \pi$, so $\Psi(H_2) \geq \frac{1}{2}\Psi(H^*)$, as required.

In both cases, we have constructed a $d$-cube $H_2$ centered at a vertex of $\pi$ for which $\Psi(H_2) \geq \frac{1}{2} \Psi(H^*)$, which proves the lemma.
\end{proof}
\begin{figure}[bth]
\centering
  \includegraphics[width=11cm]{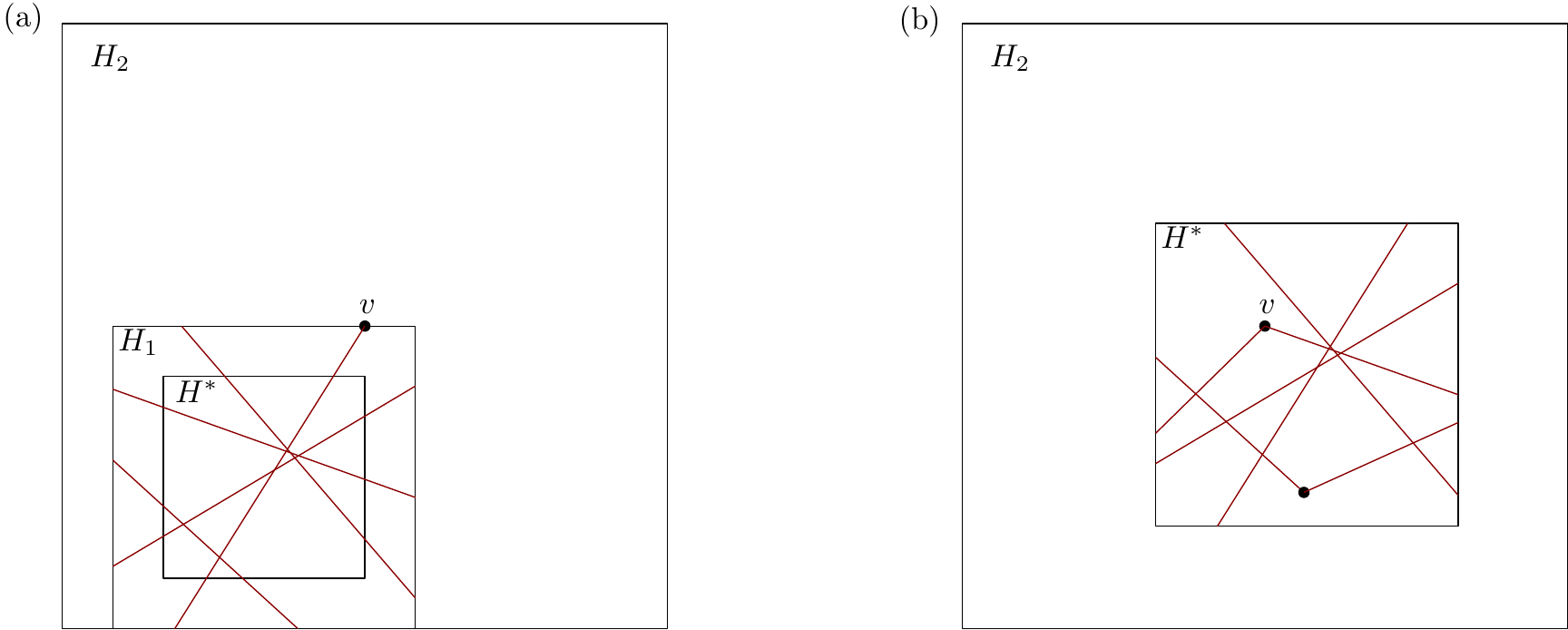}
\caption{Illustrating the two cases in the proof of Lemma~\ref{2ApprxAlgo}: Case 1 in (a) and Case 2 in (b).}
\label{expandTheSquare}
\end{figure}

%Putting together the results in Lemmas~\ref{lemma:fixed-centre} and~\ref{2ApprxAlgo} immediately gives us Theorem~\ref{theorem:simpleAlgorithmThm} which is the main result of this section.

\begin{table}[tbh]
    \centering
    \begin{tabular}{|c||c|c|c|c|c|c|}
    \hline
    \bf Dataset & \#Curves & MaxCurveSize & Min & Max &  Avg & Avg $c/n$ \\ \hline \hline
     Vessel-Y & 187  & 320& 2.37 & 14.28 &  3.03 & 0.022\\ \hline
     Hurdat &1785 & 133 & 2 & 16.58 &  3.24 & 0.154 \\ \hline
     Pen & 2858 & 182 & 4.07 & 20.82 & 8.79 & 0.073 \\ \hline
     Bats &545 & 736 & 1.08&29.52&3.60 & 0.0625 \\ \hline
     Bus &148 &1012 & 3.21 & 34.99 &  14.70 & 0.052 \\ \hline
     Vessel-M & 103 & 143 &1.04&46.19 & 4.66 & 0.272 \\ \hline
     Basketball &20780 & 138 & 2.00 & 48.65 &  3.95 & 0.092 \\ \hline
     Football &18028 & 853 & 2.12 & 48.87 & 7.66 & 0.045 \\ \hline
     Truck &276  & 983 & 5.32 & 110.44 & 25.48 & 0.079 \\ \hline
     Buffalo & 163 & 479 &  1.17 & 254.14  & 68.42 & 0.505 \\ \hline
     Pigeon & 131 & 1504 & 3.64 & 275.18 & 90.93 & 0.12 \\ \hline
     Geolife & 1000 & 64390 & 1.02 & 858.19  & 23.31 & 0.057 \\ \hline
     Gull & 241 &3237 & 1.02 & 1082.20&139.50& 0.478 \\ \hline
     Cats  & 152 &2257 & 6.04 & 1122.77 & 207.86 & 0.655 \\ \hline
     Seabirds & 63  &2970 &5.59&1803.72&825.57& 0.388\\ \hline
     Taxi & 1000  & 115732 & 2.20 & 4255.23 & 55.38 & 0.313\\ \hline
    \end{tabular}
    \caption{The table lists 16 real-world data sets. The second and third columns shows the number of curves and the maximum complexity of a curve in the set. The following three columns lists the minimum, maximum and average approximate packedness values. The rightmost column states the average ratio between $c$ and $n$ for the data sets.}
    \label{tab:ExperimentsData}
\end{table}

\begin{table*}[bth] 
	\centering
{ \setlength{\tabcolsep}{0.5em}
% 	\begin{tabular}{l|r|r|rr|p{8.5cm}}
	\begin{tabular}{l|r|r|r|l}
		\hline
		Data Set & $n$ & $d$ & $\#$vertices & Trajectory Description \\
		\hline
Vessel-M~\cite{vessel05} 	& $106$ 	& $2$ 	& $23.0$ 	 	& Mississippi river shipping vessels Shipboard AIS.\\
Pigeon~\cite{pigeon16} 		& $131$ 	& $2$ 	& $970.0$ 	 	& Homing Pigeons (release sites to home site).\\
Seabird~\cite{masked17} 	& $134$ 	& $2$ 	& $3175.8$ 	 	& GPS of Masked Boobies in Gulf of Mexico.\\
Bus~\cite{truckbus05} 		& $148$ 	& $2$ 	& $446.6$ 	 	& GPS of School buses.\\
Cats~\cite{cats16} 			& $154$ 	& $2$ 	& $526.1$ 	 	& Pet house cats GPS in Raleigh-Durham, NC, USA.\\
\hline
Buffalo~\cite{kruger09} 	& $165$ 	& $2$ 	& $161.3$ 	 	& Radio-collared Kruger Buffalo, South Africa.\\
Vessel-Y~\cite{vessel05} 	& $187$ 	& $2$ 	& $155.2$ 	 	& Yangtze river shipping Vessels Shipboard AIS.\\
Gulls~\cite{gulls15} 		& $253$ 	& $2$ 	& $602.1$ 	 	& Black-backed gulls GPS (Finland to Africa).\\
Truck~\cite{truckbus05} 	& $276$ 	& $2$ 	& $406.5$ 	 	& GPS of 50 concrete trucks in Athens, Greece.\\
Bats~\cite{bats15} 			& $545$ 	& $2$ 	& $44.1$ 	 	& Video-grammetry of Daubenton trawling bats.\\
\hline
Hurdat2~\cite{hurdat217} 	& $1788$ 	& $2$ 	& $27.7$ 	 	& Atlantic tropical cyclone and sub-cyclone paths.\\
Pen~\cite{pentip06} 		& $2858$ 	& $2$ 	& $119.8$ 	 	& Pen tip characters on a WACOM tablet.\\
Football~\cite{soccer15} 	& $18034$ 	& $2$ 	& $203.4$ 	 	& European football player (team ball-possession).\\
Geolife~\cite{geo2012} 		& $18670$ 	& $2$ 	& $1332.5$ 	 	& People movement, mostly in Beijing, China.\\
Basketball~\cite{NBA16} 	& $20780$ 	& $3$ 	& $44.1$ 	 	& NBA basketball three-point shots-on-net.\\
Taxi~\cite{taxiA11,taxiB10} & $180736$ 	& $2$ 	& $343.0$ 	 	& $10$,$357$ Partitioned Beijing taxi trajectories.\\
\hline
	\end{tabular}
}
%\todo[inline]{\normalfont John, could you add average length and average Vert/Length as two additional columns? Pls use Scientific format for the reach column. Numbers have to be in math mode - Always!}
	\caption{Real data sets, showing number of input trajectories $n$, dimensions $d$, average number of simplified vertices per trajectory, and a description.}
	\label{tab:NNRes}
\end{table*}

\subsection{Experimental results} \label{sec:experiments}
We implemented the above algorithm to test the approximate packedness of real-world trajectory data. We ran the algorithm on 16 data sets. The data sets were kindly provided to us by the authors of~\cite{ghps-pissf-20}. Table~\ref{tab:NNRes} summarises the data sets and is taken from~\cite{ghps-pissf-20}. The minimum/maximum/average (approximate) packedness values and the ratio between $c$ and $n$ for each dataset are listed in Table~\ref{tab:ExperimentsData}. Both the Geolife dataset and the Taxi dataset consist of over 20k trajectories, many of which are very large. For the experiments we randomly sampled 1,000 trajectories from each of these sets.

Although these are only sixteen data sets, it is clear that the notion of $c$-packedness is a reasonable model for many real-world data sets. For example, the maximal packedness value for all trajectories in all the first eight data sets is less than 50 and the average (approximate) packedness value is below 15. Looking at the ratio between $c$ and $n$, we can see that for many data sets the value of $c$ is considerably smaller than $n$. 

Consider the task of computing the continuous Fr\'echet distance between two trajectories. For two trajectories of complexity $n$, computing the distance will require $O^*(n^2)$ time  (even for an $O(1)$-approximation) while a $(1+\varepsilon)$-approximation can be obtained for $c$-packed trajectories in $O^*(cn)$ time.  Thus the algorithm by Driemel~\etal~\cite{DBLP:journals/dcg/DriemelHW12} for $c$-packed curves is likely to be more efficient than the general algorithm for these data sets. 
%This indicates the usefulness of the $c$-packed model for many real world data sets.

%\begin{figure}[htb]
%    \centering
%    \includegraphics[width=.8\textwidth]{}
%    \caption{first 28 samples from Geolife. \joachim{Remove?}}
%    \label{fig:n-c chart}
%\end{figure}

%%%%%%%%%%%%%%%%%%%%%%%%%%%%%%%%%%%%%%%%%%%%%%%%%%%%%%%%%%%%%%%%%%%%
%%%%%%%%%%%%%%%%%%%% Algorithm 2 %%%%%%%%%%%%%%%%%%%%%%%%%%%%%%%%%%%
%%%%%%%%%%%%%%%%%%%%%%%%%%%%%%%%%%%%%%%%%%%%%%%%%%%%%%%%%%%%%%%%%%%%

\section{A fast \texorpdfstring{$(6+\varepsilon)$}{(6+epsilon)}-approximation algorithm} \label{sec:6-approximation}

In this section we will take a different approach to Section~\ref{sec:2-approximation} to yield an algorithm that considers a linear number of squares rather than a quadratic number of squares. First we will identify a set $\mathcal{S}$ containing a linear number of squares that will include a square having a high packedness value (Section~\ref{ssec:linear-number}). Then we will build a multi-level data structure (Section~\ref{ssec:data-structure}) on $\pi$ such that given a square $S \in \mathcal{S}$ it can quickly approximate $|S\cap \pi|$.

\subsection{Linear number of good squares} \label{ssec:linear-number}
To prove that it suffices to consider a linear number of squares we will use the well-known Well-Separated Pair Decomposition (WSPD) by Callahan and Kosaraju~\cite{ck-dmpsa-95}. 

Let $A$ and $B$ be two finite sets of points in $\R^d$ and let $s > 0$ be a real number. We say that $A$ and $B$ are well-separated with respect to $s$, if there exist two disjoint balls $C_A$ and $C_B$, such that (1) $C_A$ and $C_B$ have the same radius, (2) $C_A$ contains the bounding box of $A$ and $C_B$ contains the bounding box of $B$, and (3) the distance between $C_A$ and $C_B$ is at least $s$ times the radius of $C_A$ and $C_B$. The real number $s$ is called the separation ratio.

\begin{lemma}\label{lemma:properties-of-ws-pairs}
Let $s > 0$ be a real number, let $A$ and $B$ be two sets in $\R^d$ that are well-separated with respect to $s$, let $a$ and $a'$ be two points in $A$, and let $b$ and $b'$ be two points in $B$. Then (1) $|aa'|\leq (2/s)\cdot |ab|$, and (2) $|a'b'| \leq  (1 + 4/s)\cdot |ab|$.
\end{lemma}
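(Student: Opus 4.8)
The plan is to exploit the geometry of the two enclosing balls $C_A$ and $C_B$ guaranteed by well-separatedness, and to convert all distances into expressions involving the common radius $\rho$ of these balls. Let me write $\rho$ for the radius of $C_A$ and $C_B$, and let $\delta$ denote the distance between the two balls, so that by condition (3) we have $\delta \geqslant s\rho$, equivalently $\rho \leqslant \delta/s$. The whole argument rests on bounding intra-set distances by the diameter of a single ball (namely $2\rho$) and inter-set distances from below by $\delta$.

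For part (1), I would argue that since $a$ and $a'$ both lie inside $C_A$, which has radius $\rho$, we have $|aa'| \leqslant 2\rho$. On the other hand, $a \in C_A$ and $b \in C_B$, so the segment $ab$ must cross the gap between the balls, giving $|ab| \geqslant \delta \geqslant s\rho$, hence $\rho \leqslant |ab|/s$. Combining, $|aa'| \leqslant 2\rho \leqslant (2/s)\cdot|ab|$, which is exactly the claim. For part (2), the natural route is the triangle inequality: write $|a'b'| \leqslant |a'a| + |ab| + |bb'|$. The first term is bounded by part (1) as $|a'a| \leqslant (2/s)|ab|$, and by the symmetric version of the same argument (with the roles of $A$ and $B$ swapped, noting both balls share radius $\rho$) we get $|bb'| \leqslant 2\rho \leqslant (2/s)|ab|$ as well. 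Adding the three terms yields $|a'b'| \leqslant (2/s)|ab| + |ab| + (2/s)|ab| = (1 + 4/s)|ab|$, as required.

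The only subtlety worth checking carefully is the lower bound $|ab| \geqslant \delta$ in the case where the two balls are arranged generically in $\R^d$; one must be sure that the minimum distance between any point of $C_A$ and any point of $C_B$ is precisely the distance $\delta$ between the balls (center-to-center distance minus $2\rho$), and that the definition of "distance between $C_A$ and $C_B$" in condition (3) is indeed this closest-point distance. Granting that reading of the definition, the bound $|ab| \geqslant \delta \geqslant s\rho$ is immediate and no real obstacle arises. I do not anticipate a hard step here — the lemma is a routine but essential distance-bookkeeping fact, and the main care is simply to funnel everything through the shared radius $\rho$ and to invoke part (1) twice (once for each set) when proving part (2).
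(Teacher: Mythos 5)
Your proof is correct and is the standard argument for this fact; the paper itself states the lemma without proof (it is quoted verbatim from the well-separated pair decomposition literature of Callahan and Kosaraju), so there is no in-paper proof to diverge from. Your reading of condition (3) as the closest-point distance between the two balls is the intended one, which makes the chain $|ab|\geq s\rho$, $|aa'|\leq 2\rho$, and the triangle inequality for part (2) go through exactly as you describe.
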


\begin{definition}
Let $S$ be a set of $n$ points in $\R^d$, and let $s > 0$ be a real number. A well-separated pair
decomposition (WSPD) for $S$, with respect to $s$, is a sequence $\{A_1, B_1\}, \ldots ,$  $ \{A_m, B_m\}$ of pairs of non-empty subsets of $S$, for some integer $m$, such that:
\begin{enumerate}
    \item for each $i$ with $1 \leq i \leq m$, $A_i$ and $B_i$ are well-separated with respect to $s$, and
    \item for any two distinct points $p$ and $q$ of $S$, there is exactly one index $i$ with $1 \leq i \leq m$, such that $p \in A_i$ and $q \in B_i$, or $p \in B_i$ and $q \in A_i$. The integer $m$ is called the size of the WSPD.
\end{enumerate}
\end{definition}

\begin{lemma} \label{lemma:WSPD} (Callahan and Kosaraju~\cite{ck-dmpsa-95}) Given a set $V$ of $n$ points in $\R^d$, and given a real number $s > 0$, a well-separated pair decomposition for $V$, with separation ratio $s$, consisting of $O(s^dn)$ pairs, can be computed in $O(n \log n + s^dn)$ time.
\end{lemma}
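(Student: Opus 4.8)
The plan is to build a hierarchical decomposition of $V$---a \emph{fair split tree} $T$---and then generate all well-separated pairs by a single recursive traversal of $T$. A fair split tree is a binary tree whose leaves are the points of $V$ and whose internal nodes $u$ each correspond to an axis-aligned box $R(u)$ containing the subset $S(u) \subseteq V$ stored in the subtree rooted at $u$. It is built top-down: starting from the bounding box of $V$, repeatedly split the current box by the hyperplane orthogonal to its longest side through the midpoint, recursing on the two non-empty halves until each box holds a single point. To prevent long chains of splits that peel off one point at a time (which would blow up the depth and hence the running time), one compresses such chains by replacing, after each split, a child's box by the bounding box of the points it contains. The standard analysis then shows $T$ has $O(n)$ nodes and can be constructed in $O(n \log n)$ time.

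Given $T$, I would generate the pairs with a recursive procedure on ordered pairs of nodes. For nodes $u,v$, if $S(u)$ and $S(v)$ are well-separated with respect to $s$ (a test computable in $O(d)$ time directly from $R(u)$ and $R(v)$), output the pair $\{S(u), S(v)\}$ and stop; otherwise split the node whose box has the larger longest side, say $u$, and recurse on $(u_1, v)$ and $(u_2, v)$, where $u_1, u_2$ are the children of $u$. The computation is launched by calling this procedure on the two children of the root of $T$.

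Correctness separates into the two WSPD axioms. Axiom~1 is immediate, since a pair is emitted only when the separation test succeeds. For Axiom~2 I would show that every unordered pair of distinct points $\{p,q\}$ is covered by exactly one output pair: tracing the recursion there is a unique first node pair $(u,v)$ with $p \in S(u)$, $q \in S(v)$ for which the test succeeds. Existence holds because once the boxes shrink to singletons the test trivially passes, and uniqueness holds because at every branching the children of the split node induce a partition of $S(u)$, so the point-pairs are distributed disjointly among the recursive calls.

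The main obstacle, and the crux of the theorem, is bounding the number of emitted pairs, which controls both the output size and the running time. I would charge each emitted pair $\{S(u), S(v)\}$ to the parent call that spawned it, that is, to the node $w$ (the larger box) that was split just before the pair was found; then fix $w$ and count the nodes $v$ for which a call $(w,v)$ occurs with $R(w)$ at least as long as $R(v)$ and $(w,v)$ not well-separated. Such boxes $R(v)$ must lie within distance $O(s \cdot \ell_{\max}(R(w)))$ of $R(w)$ and have longest side bounded below by a constant fraction of $\ell_{\max}(R(w))$, since their parents failed the same size comparison. A packing argument exploiting the fairness of the split---which guarantees the relevant boxes have controlled overlap, so that only $O(s^d)$ of them at a comparable scale fit within a region of diameter $O(s)$ times that scale---bounds this count by $O(s^d)$. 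Summing over the $O(n)$ nodes $w$ yields $O(s^d n)$ pairs; since each recursive call does $O(d)$ work and the number of calls is within a constant factor of the number of pairs, the generation phase runs in $O(s^d n)$ time, which together with the $O(n \log n)$ construction of $T$ gives the stated $O(n \log n + s^d n)$ bound. I expect the delicate part to be formalizing this packing argument, namely pinning down precisely which invariant of the fair split tree forces the per-node count to be $O(s^d)$.
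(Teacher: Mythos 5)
The paper offers no proof of this lemma; it is quoted from Callahan and Kosaraju, so your proposal can only be measured against their original argument. Your outline does follow that argument---fair split tree, recursive pair generation, and a packing bound of $O(s^d)$ pairs charged to each node---but two points need attention.

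First, a concrete bug: launching the pair-finding recursion only on the two children of the root violates Axiom~2. Every recursive call $(u,v)$ descending from that single initial call keeps $S(u)$ and $S(v)$ on opposite sides of the root split, so a pair of points $\{p,q\}$ lying in the \emph{same} subtree of the root is never covered by any emitted pair. The standard fix is to invoke the procedure on the pair of children of \emph{every} internal node of $T$; this also gives the uniqueness half of Axiom~2, since for each $\{p,q\}$ exactly one internal node (their lowest common ancestor) generates the relevant top-level call, and below it the point pairs are distributed disjointly as you describe. Second, the $O(s^d)$ per-node packing bound is the entire content of the theorem, and you assert it rather than prove it---you say as much yourself. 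The invariant you need is the aspect-ratio guarantee of the fair split: every box in the (compressed) split tree has shortest side at least a constant fraction (one third in Callahan and Kosaraju's analysis) of its longest side, combined with the rule that $w$ is split in a call $(w,v)$ only when $\ell_{\max}(R(w)) \geq \ell_{\max}(R(v))$. Together these force the boxes responsible for splitting $w$ to be pairwise disjoint, of scale $\Theta(\ell_{\max}(R(w)))$ in every coordinate, and within distance $O(s\cdot \ell_{\max}(R(w)))$ of $R(w)$, so a volume comparison caps their number at $O(s^d)$. Without that invariant stated and used, the bound of $O(s^d n)$ pairs---and with it the running time---is unsupported.
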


Now we are ready to construct a set $\mathcal{S}$ of squares. Compute a well-separated pair decomposition $W=\{(A_1,B_1), \ldots , (A_m,B_m)\}$ with separation constant $s=720/\varepsilon$ for the vertex set of $\pi$.
For every well-separated pair $(A_i,B_i)\in W$, $1\leq i\leq k$, construct two squares that will be added  to $\mathcal{S}$ as follows:

Pick an arbitrary point $a\in A_i$ and an arbitrary point $b\in B_i$. Construct one square with center at $a$ and radius $r$, and one square with center at $b$ and radius $r$, where $r=\max\{|a.x-b.x|,|a.y-b.y|\}+\varepsilon/120 \cdot |ab|$. The two squares are added to $\mathcal{S}$.

It follows immediately from Lemma~\ref{lemma:WSPD} that the number of squares in $\mathcal{S}$ is $O(n/\varepsilon^2)$ and that one can construct $\mathcal{S}$ in $O(n \log n+n/\varepsilon^2)$ time.

To prove the approximation factor of the algorithm we will first need the following technical lemma. 

\begin{lemma} \label{lemma:vertexOnBoundaryLemma}
 Let $H_{r_1}^p$ and $H_{r_2}^p$, with $r_2 > r_1$, be two squares with centre at $p$ such that $H_{r_2}^p \setminus H_{r_1}^p$ contains no vertices of $\pi$ in its interior. For any value $r_x$, with $r_1\leqslant r_x\leqslant r_2$, it holds that $\psi_p(r_x) \leq \psi_p(r_1) + 2\cdot \psi_p(r_2)$.
\end{lemma}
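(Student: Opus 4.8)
The plan is to bound $\Upsilon(H^p_{r_x})$ by sorting the edges of $\pi$ according to where their endpoints lie, exploiting the hypothesis in an essential way. Since $H^p_{r_2}\setminus H^p_{r_1}$ contains no vertex of $\pi$ in its interior, every vertex of $\pi$ lies either in $H^p_{r_1}$ or outside $H^p_{r_2}$. Hence each edge falls into exactly one of three families: \emph{inside} edges (both endpoints in $H^p_{r_1}$), \emph{outside} edges (both endpoints outside $H^p_{r_2}$), and \emph{crossing} edges (one endpoint in $H^p_{r_1}$, one outside $H^p_{r_2}$). Writing $\psi_p(r_x)=\tfrac1{r_x}\sum_i |s_i\cap H^p_{r_x}|$, I would split this sum along the three families and bound each part, aiming for $\psi_p(r_1)$ from the inner material, $\psi_p(r_2)$ from the outside edges, and a further $\psi_p(r_2)$ from the crossing edges.

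The two easy families come first. An inside edge is contained in the convex set $H^p_{r_1}\subseteq H^p_{r_x}$, so its contribution to $\Upsilon(H^p_{r_x})$ equals its contribution to $\Upsilon(H^p_{r_1})$; dividing by $r_x\geq r_1$ shows these edges contribute at most $\psi_p(r_1)$. An outside edge that meets $H^p_{r_2}$ is first touched by the growing cube at an \emph{interior} point (both endpoints have radius $\geq r_2$), and it meets no vertex before radius $r_2$, so Lemma~\ref{segIIMono} applies on $[0,r_2]$ and gives $|s_i\cap H^p_{r_x}|/r_x\leq |s_i\cap H^p_{r_2}|/r_2$; summing, the outside edges contribute at most $\psi_p(r_2)$.

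The crossing edges are the crux. For such an edge I would cut its chord in $H^p_{r_x}$ at $\partial H^p_{r_1}$: the inner piece is part of $|s_i\cap H^p_{r_1}|$ and, divided by $r_x\geq r_1$, is absorbed into the $\psi_p(r_1)$ bound together with the inside edges. What remains is the \emph{annulus piece} in $H^p_{r_x}\setminus H^p_{r_1}$, and I must show that the total annulus length, divided by $r_x$, is at most $\psi_p(r_2)$. The key geometric observation, and the second use of the hypothesis, is that a crossing edge runs monotonically outward from inside $H^p_{r_1}$ to outside $H^p_{r_2}$: the $L^\infty$ distance $f(t)=\lVert p - \cdot\rVert_\infty$ along the edge is convex and $1$-Lipschitz, so the edge cannot run nearly parallel to a cube face for long inside the annulus — a long grazing chord would either force a vertex into the annulus (forbidden) or make the edge an outside edge. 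Consequently every crossing edge has a genuinely ``radial'' extension beyond $H^p_{r_x}$ out to $\partial H^p_{r_2}$ of length at least $r_2-r_x$, lying in $H^p_{r_2}$; I would charge each annulus chord to this extension (using convexity/$1$-Lipschitzness of $f$ to compare the annulus length with the radial progress) so that the total annulus contribution is bounded by $\psi_p(r_2)$.

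I expect the crossing bound to be the main obstacle: controlling how long a crossing chord can linger in the thin annulus relative to its radial advance is exactly where the factor and the no-vertex hypothesis must be spent, and where the constants are delicate. A useful sanity check that guides the bookkeeping is that when $r_x\geq r_2/2$ the whole statement is immediate from monotonicity of $\Upsilon$, since $\psi_p(r_x)=\Upsilon(H^p_{r_x})/r_x\leq \Upsilon(H^p_{r_2})/(r_2/2)=2\psi_p(r_2)$; thus the real work is confined to the regime $r_x<r_2/2$, where the annulus is small and the hypothesis does the heavy lifting. Adding the three bounds — $\psi_p(r_1)$ from all inner material, $\psi_p(r_2)$ from outside edges, and $\psi_p(r_2)$ from the crossing annulus pieces — yields $\psi_p(r_x)\leq \psi_p(r_1)+2\psi_p(r_2)$, as required.
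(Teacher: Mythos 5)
Your skeleton matches the paper's proof: split off the material inside $H^p_{r_1}$ and bound it by $\psi_p(r_1)$ using $r_x\geq r_1$, invoke Lemma~\ref{segIIMono} for edges whose first contact with the growing square is an interior point (the paper's Type~I), and treat separately the edges that emanate from a vertex inside $H^p_{r_1}$ and cross the annulus (the paper's Type~II, your ``crossing'' edges). The gap is exactly where you predicted it, and the specific claim you make there is false: the annulus pieces of the crossing edges cannot be charged with factor $1$ against $\psi_p(r_2)$. Take $p$ at the origin, a vertex just inside the top-left corner of $H^p_{r_1}$, and an edge leaving it with slope close to $0$, so that it exits $H^p_{r_1}$ through the top side near that corner and then runs at height just above $r_1$ all the way past $H^p_{r_2}$. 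No vertex lies in the annulus, yet for $r_x$ slightly larger than $r_1$ the chord in $H^p_{r_x}\setminus H^p_{r_1}$ has length close to $2r_x$ (it spans nearly the full width of $H^p_{r_x}$), so $l_x/r_x\to 2$, while at radius $r_2\gg r_1$ the same edge contributes only about $r_2+r_1$, so $l_2/r_2\to 1$. Thus the ratio $(l_x/r_x)/(l_2/r_2)$ approaches $2$, and no ``radial extension'' of length $r_2-r_x$ can repair this: the obstruction is the normalisation by $r_x$ versus $r_2$, not a shortage of absolute length at radius $r_2$. Your Lipschitz/convexity observation about $t\mapsto\|p-s(t)\|_\infty$ only lower-bounds the chord length by the radial progress; it gives no upper bound on how long a chord can linger nearly tangentially in the annulus, and the no-vertex hypothesis does not forbid such grazing crossing edges.

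What is true, and what the paper proves by an explicit trigonometric case analysis (splitting on whether the annulus chord crosses the perpendicular from $p$ to the side of $H^p_{r_1}$ through which it exits, and using $d_0\leq r_1\leq r_x$), is the per-edge bound $l_x/r_x\leq 2\,l_2/r_2$. Your bookkeeping still closes with this weaker bound, because the outside edges and the crossing edges contribute to disjoint parts of $\Upsilon(H^p_{r_2})$: the outside edges give at most their own share of $\psi_p(r_2)$, the crossing annulus pieces give at most twice their share, and the sum is still at most $2\psi_p(r_2)$, yielding $\psi_p(r_x)\leq\psi_p(r_1)+2\psi_p(r_2)$. (The paper reaches the same conclusion slightly differently, bounding all of $M_x/r_x$ by $2M_2/r_2\leq 2\psi_p(r_2)$.) So the decomposition is sound, but the crossing-edge step needs the factor-$2$ case analysis rather than the factor-$1$ charging argument you sketch; your observation that the claim is immediate for $r_x\geq r_2/2$ is correct but does not reach the hard regime.
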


\begin{figure}[h!]
\centering
\includegraphics[width=\textwidth]{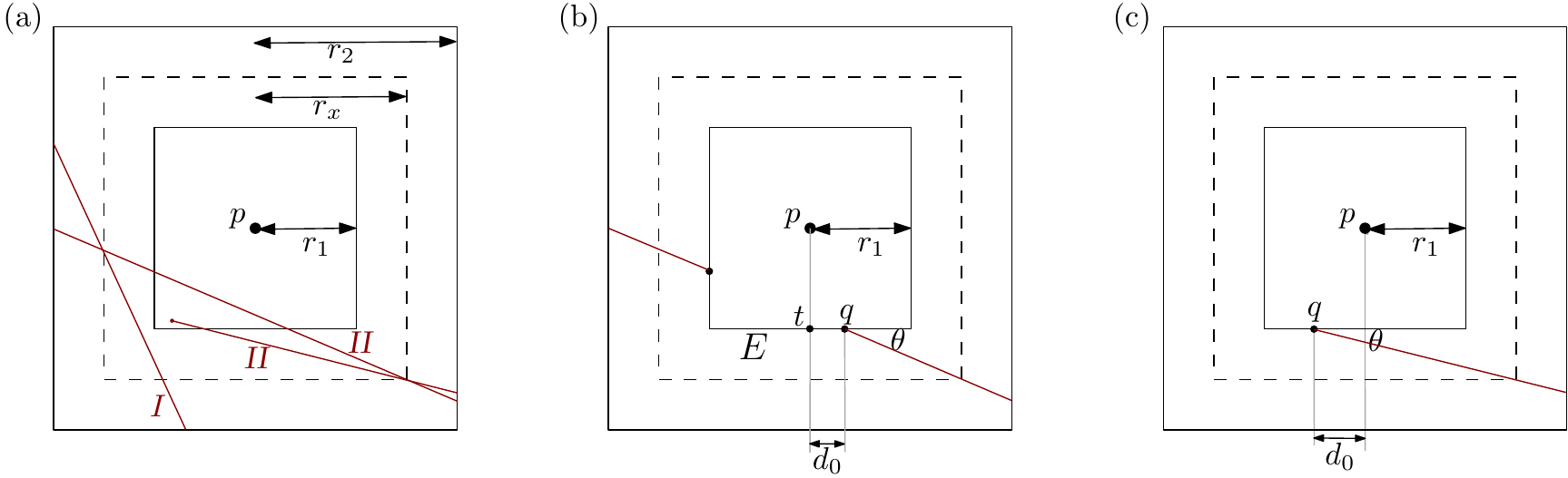}
\caption{(a) An illustration of the proof of Lemma~\ref{lemma:vertexOnBoundaryLemma} and the two types of segments that are considered. (b) Showing case 1, and (c) case 2 of of Type II segments.}
\label{fig:between2Vertices}
\end{figure}
\begin{proof}
Let
\begin{align*}
    \psi_p(r_1) &= \frac{\Upsilon(H^p_{r_1})}{r_1},\\ 
    \psi_p(r_2) &= \frac{\Upsilon(H^p_{r_1})}{r_2}+\frac{M_2}{r_2}, \text{ and} \\
    \psi_p(r_x) &=
    \frac{\Upsilon(H^p_{r_1})}{r_x}+\frac{M_x}{r_x},
\end{align*}
where $M_2=\Upsilon(H^p_{r_2})- \Upsilon(H^p_{r_1})$ and $M_x=\Upsilon(H^p_{r_x})- \Upsilon(H^p_{r_1})$.

It is clear that $\frac{\Upsilon(H^p_{r_1})}{r_x}<\frac{\Upsilon(H^p_{r_1})}{r_1}$, so the first term of $\psi_p(r_x)$ is bounded by $\psi_p(r_1)$. Next we consider the second term.

There are no vertices of $\pi$ inside $H^p_{r_2}\setminus H^p_{r_1}$. These segments either cross $H^p_{r_2}\setminus H^p_{r_1}$ and have no intersection with $H^p_{r_1}$, like segment I in Fig.~\ref{fig:between2Vertices}(a), or intersect $H^p_{r_1}$ and cross $H^p_{r_2}\setminus H^p_{r_1}$, like segment II in Fig.~\ref{fig:between2Vertices}(a).

Let $l_x$ denote a segment's contribution to $M_x$. Let us consider how $l_x/r_x$ changes when $r_x$ grows from $r_1$ to $r_2$. 

\begin{enumerate}
    \item Segment of Type I: We know from Lemma~\ref{segIIMono} that $l_x/r_x$ is non-decreasing as $r_x$ grows from $r_1$ to $r_2$. 
    
    \item Segment of Type II: Consider a segment $s$ of Type II and the subsegment (or two subsegments) $s\cap (H^p_{r_1}\setminus H^p_{r_2})$. A subsegment $s'$ of $s$ has one endpoint $q$ on the boundary of $H^p_{r_1}$ and one endpoint on the boundary of $H^p_{r_2}$. Let $E$ be the side of $H^p_{r_1}$ containing $q$. Let $t$ be the middle point of $E$, let $d_0=|qt|$ and let $\theta$ be the acute angle between $s'$ and $E$. There are two cases.
    \begin{itemize}
    \item The subsegment $s'$ does not cross line $pt$ in region $H^p_{r_2}\setminus H^p_{r_1}$, as in Figure~\ref{fig:between2Vertices}(b).
    Let $r_x'$ be the radius when $s'$ crosses a corner of $H^p_{r_x}$. When $r_x<r_x'$,
    \begin{align*}
    \frac{l_x}{r_x}=\frac{\frac{\Delta r}{\sin\theta}}{r_x}=\frac{1}{\sin\theta}\cdot\frac{\Delta r}{r_1+\Delta r}
    \end{align*}
    increases strictly. When $r_x\geq r_x'$,
    \begin{equation*}
    \frac{l_x}{r_x}=\frac{\frac{r_x-d_0}{\cos\theta}}{r_x}=\frac{1}{\cos\theta}\cdot(1-\frac{d_0}{r_x})
    \end{equation*}
    continues to increase strictly, so $l_x/r_x\leqslant l_2/r_2$.
    
    \item The subsegment $s'$ crosses line $pt$ in region $H^p_{r_2}\setminus H^p_{r_1}$, as shown in Figure~\ref{fig:between2Vertices}(c). $r_x'$ is defined as above. When $r_x<r_x'$, $l_x/r_x$ increases strictly. When $r_x\geq r_x'$, 
    \begin{equation*}
    \frac{l_x}{r_x}=\frac{1}{\cos\theta}\cdot\frac{r_x+d_0}{r_x}=\frac{1}{\cos\theta}\cdot(1+\frac{d_0}{r_x})
    \end{equation*}
begins to decrease. However, since $d_0\leqslant r_1\leqslant r_x\leqslant r_2$,
\begin{equation*}
    \frac{1}{\cos\theta}\cdot\frac{r_x+d_0}{r_x}<\frac{1}{\cos\theta}\cdot 2<2\cdot\frac{1}{\cos\theta}\cdot (1+\frac{d_0}{r_2})=2\cdot\frac{l_2}{r_2}.
\end{equation*}
%This bound is tight. The factor 2 is approached when $\theta\approx 90^o$, $d_0=r_1$ and $r_1\ll r_2$.
\end{itemize}

\end{enumerate}    
    For all cases, $\frac{l_x}{r_x}\leqslant 2\cdot\frac{l_2}{r_2}$. Thus $\frac{M_x}{r_x}\leqslant 2\cdot\frac{M_2}{r_2}$. We get
\begin{equation*}
    \psi_p(r_x)< \psi_p(r_1)+2\cdot\frac{M_2}{r_2}\leqslant\psi_p(r_1)+ 2\psi_p(r_2).
\end{equation*}
\end{proof}

Due to Lemma~\ref{2ApprxAlgo}, it suffices to consider squares with center at a vertex of $\pi$ to obtain a $2$-approximation. Combining this with Lemma~\ref{lemma:vertexOnBoundaryLemma}, it suffices to consider squares with center at a vertex of $\pi$ and a vertex of $\pi$ on its boundary to obtain a $6$-approximation. Using the WSPD argument we have reduced our set of squares to a linear number of squares and we will now argue that $\mathcal{S}$ must contain a square that has a high packedness factor. Let $H^*$ be a square with a maximum packedness value of $\pi$.

\begin{lemma} \label{lemma:linear-number-of-squares}
 There exists a square $S \in \mathcal{S}$ such that $\Psi(S) \geqslant \frac{1}{(6+\varepsilon/8)} \cdot \Psi(H^*)$. 
\end{lemma}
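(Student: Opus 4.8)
The plan is to transfer the near-optimal ``ideal'' square guaranteed by the preceding discussion onto one of the squares that the WSPD construction actually places in $\mathcal{S}$. By Lemma~\ref{2ApprxAlgo} together with Lemma~\ref{lemma:vertexOnBoundaryLemma} (the $6$-approximation remark immediately above), there is a square $H^\circ$ whose centre $u$ is a vertex of $\pi$, which has a second vertex $w$ of $\pi$ on its boundary, and which satisfies $\Psi(H^\circ)\geqslant\tfrac16\Psi(H^*)$. Since $w$ lies on the boundary of the axis-aligned square $H^\circ$ centred at $u$, its radius is $r^\circ=\|u-w\|_\infty=\max\{|u.x-w.x|,|u.y-w.y|\}$. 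As $u$ and $w$ are distinct vertices of $\pi$, the WSPD contains exactly one pair $(A_i,B_i)$ separating them; assume without loss of generality that $u\in A_i$ and $w\in B_i$, and let $a\in A_i$, $b\in B_i$ be the representatives chosen in the construction, so that $S\in\mathcal S$ is the square centred at $a$ with radius $r=\|a-b\|_\infty+\frac{\varepsilon}{120}|ab|$. I will show that this $S$ satisfies the claimed bound.

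First I would establish containment, $H^\circ\subseteq S$, which immediately gives $\Upsilon(S)\geqslant\Upsilon(H^\circ)$. Since both squares are axis-aligned, $H^\circ\subseteq S$ holds as soon as $r\geqslant r^\circ+\|a-u\|_\infty$. By the $L_\infty$ triangle inequality, $r^\circ=\|u-w\|_\infty\leqslant\|a-b\|_\infty+\|a-u\|_\infty+\|b-w\|_\infty$, so it suffices that $\|a-b\|_\infty+2\|a-u\|_\infty+\|b-w\|_\infty\leqslant r$. Bounding the $L_\infty$ distances by Euclidean ones and invoking Lemma~\ref{lemma:properties-of-ws-pairs}(1) (which gives $\|a-u\|_\infty\leqslant|au|\leqslant(2/s)|ab|$ and likewise $\|b-w\|_\infty\leqslant(2/s)|ab|$), the left-hand side is at most $\|a-b\|_\infty+(6/s)|ab|$. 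With $s=720/\varepsilon$ we have $6/s=\varepsilon/120$, so this equals $r$ exactly, and containment follows.

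Next I would bound the radius blow-up $r/r^\circ$. Using the same $L_\infty$ triangle inequality in the other direction, $\|a-b\|_\infty\leqslant r^\circ+(4/s)|ab|$, while Lemma~\ref{lemma:properties-of-ws-pairs}(2) gives $|ab|\leqslant(1+4/s)|uw|\leqslant(1+4/s)\sqrt2\,r^\circ$, the last step using $|uw|\leqslant\sqrt2\,\|u-w\|_\infty$ in the plane. Substituting into $r=\|a-b\|_\infty+\frac{\varepsilon}{120}|ab|$, the two coefficients of $|ab|$ collapse, since with $s=720/\varepsilon$ one has $4/s+\varepsilon/120=\varepsilon/180+\varepsilon/120=\varepsilon/72$, so $r\leqslant r^\circ+\frac{\varepsilon}{72}|ab|\leqslant(1+c\varepsilon)r^\circ$ with $c=\frac{\sqrt2}{72}(1+4/s)$. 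For $0<\varepsilon\leqslant1$ this gives the explicit bound $c<1/48$.

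Finally I would combine the two estimates:
\[
\Psi(S)=\frac{\Upsilon(S)}{r}\geqslant\frac{\Upsilon(H^\circ)}{(1+c\varepsilon)\,r^\circ}=\frac{\Psi(H^\circ)}{1+c\varepsilon}\geqslant\frac{1}{6(1+c\varepsilon)}\Psi(H^*).
\]
Since $c<1/48$, we get $6(1+c\varepsilon)<6+\varepsilon/8$, hence $\Psi(S)\geqslant\frac{1}{6+\varepsilon/8}\Psi(H^*)$, as required. The only delicate points are the careful propagation of constants so that they telescope to exactly $6+\varepsilon/8$, and the repeated conversions between the $L_\infty$ metric governing the squares and the Euclidean metric governing the WSPD balls. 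The main obstacle is meeting the two competing requirements simultaneously — that $S$ be large enough to contain $H^\circ$ yet small enough that its radius barely exceeds $r^\circ$ — which is precisely what the tuned choices $s=720/\varepsilon$ and the additive $\frac{\varepsilon}{120}|ab|$ slack in the construction are designed to guarantee.
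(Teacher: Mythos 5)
Your proposal is correct and follows essentially the same route as the paper's proof: identify the $6$-approximate square centred at a vertex with a second vertex on its boundary, use the WSPD pair separating those two vertices to find the corresponding square in $\mathcal{S}$, show containment via the triangle inequality and Lemma~\ref{lemma:properties-of-ws-pairs}, and bound the radius blow-up by $1+\varepsilon/48$ so that the constants telescope to $6+\varepsilon/8$. The only cosmetic difference is your explicit use of $L_\infty$ notation where the paper writes out the coordinate maxima.
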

\begin{proof}
From Lemmas~\ref{2ApprxAlgo} and~\ref{lemma:vertexOnBoundaryLemma} we know that there exists a square $H$ with centre at a vertex $p$ of $\pi$ and whose boundary contains a vertex $q$ of $\pi$ such that $\Psi(H)\geq \frac{1}{6} \cdot \Psi(H^*).$ According to the construction of $\mathcal{S}$ there exists a square $S\in\mathcal{S}$ such that $S$ has its centre at a point $a\in A_i$ and has radius $r=\max\{|a.x-b.x|,|a.y-b.y|\}+\varepsilon/120 \cdot |ab|$, where $b$ is a point in $B_i$. 

By Lemma \ref{lemma:properties-of-ws-pairs}, we have $|ap|\leq \varepsilon/360\cdot |ab|$ and $|bq|\leq \varepsilon/360\cdot |ab|$. If $r_H$ is the radius of~$H$, then $r_H=\max\{|p.x-q.x|,|p.y-q.y|\}\leq \max\{|a.x-b.x|,|a.y-b.y|\}+|ap|+|bq|\leq \max\{|a.x-b.x|,|a.y-b.y|\}+\varepsilon/120 \cdot |ab| - |ap| = r - |ap|$. But $p$ is at most $|ap|$ away from $a$ in both the $x$ and $y$ directions, so $H$ must be entirely contained inside $S$. So $\Upsilon(S)\geq \Upsilon(H)$.

Next, we show $S$ is not too much larger than $H$:
\begin{comment}
\begin{equation} \label{eq1}
\begin{split}
r & =\max\{|a.x-b.x|,|a.y-b.y|\}+\varepsilon/120 \cdot |ab| \\
  & \leqslant r_H+|ap|+|bq|+\varepsilon/120 \cdot |ab| \\
  & \leqslant r_H+\varepsilon/72 \cdot |ab| \\
  & \leqslant r_H+\frac{\varepsilon}{72}\cdot(1+\varepsilon/180) \cdot |pq| \hspace{3cm} \mathrm{(Lemma~\ref{lemma:properties-of-ws-pairs}(1))}\\
  & \leqslant r_H+\frac{\varepsilon}{36\sqrt{2}}\cdot(1+\varepsilon/180)r_H \\
  & \leqslant (1+\varepsilon/48) \cdot r_H.
\end{split}
\end{equation}
\end{comment}
\begin{equation*}
\begin{split}
r & =  \max\{|a.x-b.x|,|a.y-b.y|\}+\frac{\varepsilon}{120} \cdot |ab|  
   \leqslant r_H+|ap|+|bq|+\frac{\varepsilon}{120} \cdot |ab|  
   \leqslant r_H+\frac{\varepsilon}{72} \cdot |ab| \\
   & \leqslant 
   r_H+\frac{\varepsilon}{72}\cdot(1+\frac{\varepsilon}{180}) \cdot |pq| 
   \leqslant r_H+\frac{\varepsilon}{36\sqrt{2}}\cdot(1+\frac{\varepsilon}{180}) \cdot r_H  \leqslant (1+\frac{\varepsilon}{48}) \cdot r_H.
\end{split}
\end{equation*}

Putting this all together yields: 
$$\Psi(S) 
    = \Upsilon(S)/r 
    \geq \frac{1}{(1+\varepsilon/48)  } \cdot\Upsilon(H)/r_H
    = \frac{1}{1+\varepsilon/48} \cdot \Psi(H) 
    \geq \frac{1}{6+\varepsilon/8} \cdot \Psi(H^*),$$
which completes the lemma.
\end{proof}
%%%%%%%%%%%%%%%%%%%%%%%%%%%%%%%%%%%%%%%

\subsection{Data structure} \label{ssec:data-structure}

The aim of this section is to develop an efficient data structure on $\pi$ such that queried with an axis-aligned square $S \in \mathcal{S}$ the data structure returns an approximation of $|S\cap \pi|$.

The general idea of the multi-level data structure is that the first level is a modified 1D segment tree, similar to the hereditary segment tree~\cite{c-rcsi-86}. We partition the set of $\pi$'s segments into four sets depending on their slope;  $(-\infty,-1)$, $[-1,0)$, $[0,1)$ and $[1,\infty)$. In the rest of this section we will describe the data structure for the set of segments with slope in $[0,1)$. The remaining three sets are handled symmetrically. 

\subsubsection{Modified 1D segment tree}
The description of the segment tree follows the description in~\cite{bkos-cgta-08}. Let $L$ be the set of line segments in $\pi$. For the purpose of the 1D segment tree, we can view $L$ as a set of intervals on the line. Let $p_1, \ldots , p_m$ be the list of distinct interval endpoints, sorted from left to right. Consider the partitioning of the real line induced by those points. The regions of this partitioning are called \emph{elementary intervals}. Thus, the elementary intervals are, from left to right:
$(-\infty ,p_{1}),[p_{1},p_{1}],(p_{1},p_{2}),[p_{2},p_{2}], \ldots , (p_{m-1},p_{m}),[p_{m},p_{m}],(p_{m},+\infty )$.

Given a set $I$ of intervals, or segments, a segment tree $\mathcal{T}$ for $I$ is structured as follows:
\begin{enumerate}
    \item $\mathcal{T}$ is a binary tree.
    \item Its leaves correspond to the elementary intervals induced by the endpoints in $I$. The elementary interval corresponding to a leaf $v$ is denoted $Int(v)$.
    \item The internal nodes of $\mathcal{T}$ correspond to intervals that are the union of elementary intervals: the interval $Int(N)$ corresponding to an internal node $N$ is the union of the intervals corresponding to the leaves of the tree rooted at $N$. That implies that $Int(N)$ is the union of the intervals of its two children.
    \item Each node or leaf $v$ in $\mathcal{T}$ stores the interval $Int(v)$ and a set of intervals, in some data structure. This canonical subset of node $v$ contains the intervals $[x, x']$ from $I$ such that $[x, x']$ contains $Int(v)$ and does not contain $Int(parent(v))$. That is, each node in $\mathcal{T}$ stores the set of segments $F(v)$ that span through its interval, but do not span through the interval of its parent.
\end{enumerate}
The 1D segment tree can be built in $O(n \log n)$ time, using $O(n \log n)$ space and point stabbing queries can be answered in $O(\log n+k)$ time, where $k$ is the number of segments intersecting the query point.

We make one minor change to $\mathcal{T}$ that will increase the space usage to $O(n \log^2 n)$ but it will allow us to speed up interval stabbing queries. Each internal node $v$ store, apart from the set $F(v)$, all the segments stored in the subtree rooted at $v$, including $F(v)$. We denote this set by $L(v)$.

The main benefit of this minor modification is that when an interval stabbing query is performed only $O(\log n)$ canonical subsets are required to identify all the segments intersecting the interval. Next we show how to build associated data structures for $L(v)$ and $F(v)$ for each internal node $v$ in $\mathcal{T}$.

%%%%%%%%%%% Type A

\subsubsection{Three associated data structures}
Consider querying the segment tree $\mathcal{T}$ with a square $S\in \mathcal{S}$. There are three different cases that can occur, and for each of these cases we will build an associated data structure. That is, each internal node will have three types of associated data structures. Consider a query $S=[x,x']\times [y,y']$ and let $\mu_l$ and $\mu_r$ be the leaf nodes in $\mathcal{T}$ where the search for the boundary values $x$ and $x'$ end. See Figure~\ref{fig:ThreeADS} for an illustration of the search and the three cases. An internal node $v$ is one of the following types:
%\sampson{Reviewer2: please mention how these queries are answered?}
%\joachim{If the reader knows multilevel data structures then it's straight-forward. If the don't we probably have to add a bit of text. I propose we leave this for the full version of the paper.}
\begin{description}
  \item{{\bf Type A:}} if $Int(v) \subseteq [x,x']$, 
  \item{{\bf Type B:}} if $Int(v) \cap [x,x'] \neq \emptyset$, $Int(v) \nsubseteq [x,x']$ and $[x,x'] \nsubseteq Int(v)$, or 
  \item{{\bf Type C:}} if $[x,x'] \subset Int(v)$. 
\end{description}

\begin{figure}[htb]
\centering
\includegraphics[width=\textwidth]{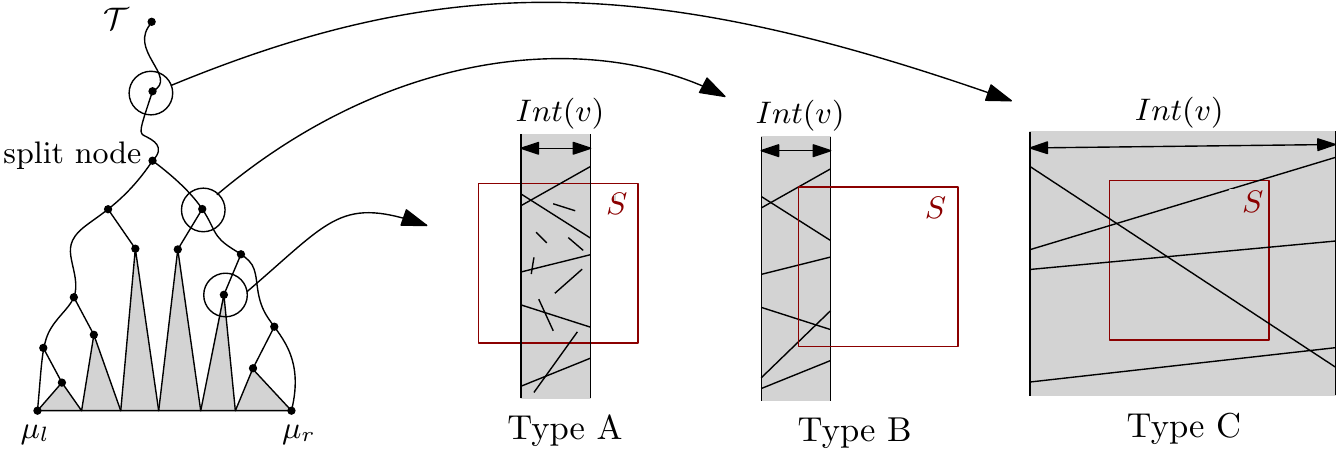}
\caption{The primary tree $\mathcal{T}$, and the three types of nodes in $\mathcal{T}$ that can be encountered during a query.}
\label{fig:ThreeADS}
\end{figure}

\paragraph*{Associated data structure for Type A nodes:} For a Type A node we need to compute the length of all segments stored in the subtree with root $v$ in the $y$-interval $[y,y']$. Let  $s_1, \ldots, s_m$ be the set of $m$ segments stored in $L(v)$, and let $Y=\langle y_1, \ldots, y_{2m} \rangle$ denote the $y$-coordinates of the endpoints of the segments in $L(v)$ ordered from bottom-to-top. To simplify the description we assume that the values are distinct.

Let $\delta(y)$ denote the total length of the segments in $L(v)$ below $y$.  %For each $y_i$, $1\le i\leq 2m$, we can compute $\delta(y_i)$ in time $O(m)$. However, 
For two consecutive $y$-values $y_i$ and $y_{i+1}$, the set of edges contributing to $\delta(y_i)$ and $\delta(y_{i+1})$ has increased or decreased by one. So, we can compute a function $\Delta(y_i,y_{i+1})$  that describes these changes in constant time. We then have $\delta(y_{i+1}) = \delta(y_i) + \Delta(y_i,y_{i+1})$, and thus we can compute $\delta(y_{i+1})$ from $\delta(y_i)$ in constant time after sorting the events.  Hence, we can compute all the $\delta(y_i)$-values and all the $\Delta(y_i,y_{i+1})$ in time $O(m \log m)$. 

Given a $y$-value $y'$ one can compute $\delta(y')$ as $\delta(y_i)+\frac{y'-y_i}{y_{i+1}-y_i} \cdot \Delta(y_i,y_{i+1})$, where $y_i$ is the largest $y$-value in $Y$ smaller than $y'$. Hence, our associated data structure for Type A nodes is a binary tree with respect to the values in $Y$, where each leaf stores the value $y_i$, $\delta(y_i)$ and $\Delta(y_i,y_{i+1})$. The tree can be computed in $O(m \log m)$ time using linear space, and can answer queries in $O(\log m)$ time. 

\begin{lemma} \label{lemma:Type-A}
The associated data structures for Type A nodes in $\mathcal{T}$ can be constructed in $O(n \log^2 n)$ time using $O(n\log^2 n)$ space. Given a query square $S$ for an associated data structure of Type A stored in an internal node $v$, the value $|L(v)\cap S \cap Int(v)|$ is returned in time $O(\log n)$.
\end{lemma}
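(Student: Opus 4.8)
The plan is to realize, for every internal node $v$, a one–dimensional structure over the vertical slab $Int(v)$ that answers ``total clipped length in a $y$-range'' queries, exactly as sketched just before the statement. First I would clip each segment of $L(v)$ to the slab $Int(v)$; because every segment in this part of $\mathcal{T}$ has slope in $[0,1)$, each clip is a single sub-segment whose length accumulates linearly with its $y$-extent. Writing $\delta(y)$ for the total clipped length lying below height $y$, the function $\delta$ is continuous and piecewise-linear in $y$, with breakpoints at the $y$-coordinates of the clipped endpoints and with slope on each piece equal to the sum of the per-unit-$y$ lengths of the sub-segments that are active there. Storing the sorted breakpoints together with the values $\delta(y_i)$ and the increments $\Delta(y_i,y_{i+1})$ in a balanced binary search tree lets a query $\delta(y')$ be recovered by one predecessor search followed by one linear interpolation, and a range query returns $\delta(y')-\delta(y)$; this yields the $O(\log n)$ query bound. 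Since $Int(v)\subseteq[x,x']$ for a Type~A node, clipping to $Int(v)$ already enforces the $x$-restriction imposed by $S$, so the returned quantity is exactly $|L(v)\cap S\cap Int(v)|$.

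The technical heart is bounding $\sum_v|L(v)|$, which controls both space and construction time. I would argue that each segment $s$ of $\pi$ occurs in only $O(\log n)$ of the lists $L(\cdot)$: by definition $s\in L(v)$ iff $v$ is an ancestor (or equal) of one of the $O(\log n)$ canonical nodes that store $s$, and the ancestors of those canonical nodes are contained in the set of nodes visited while inserting $s$ into the segment tree, namely a root-to-split path followed by two descending paths, of total size $O(\log n)$. Summing over the $n$ segments gives $\sum_v|L(v)|=O(n\log n)$, which lies within the claimed $O(n\log^2 n)$ space bound; the associated search trees add only $O(|L(v)|)$ per node and do not change this. For the construction time, at each node I sort the clipped endpoints in $O(|L(v)|\log n)$ time and then sweep once to fill in the $\delta$- and $\Delta$-values in $O(|L(v)|)$ time; summing, $\sum_v O(|L(v)|\log n)=O(\log n)\cdot O(n\log n)=O(n\log^2 n)$, as required.

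The step I expect to be the main obstacle is precisely this amortized size estimate. One must avoid the naive ``$O(\log n)$ canonical copies, each pushed to $O(\log n)$ ancestors'' count, which only yields $\sum_v|L(v)|=O(n\log^2 n)$ and would inflate the sorting cost to $O(n\log^3 n)$; the sharper observation that the relevant ancestors coincide with the $O(\log n)$ insertion-path nodes is what keeps both space and construction time at $O(n\log^2 n)$. Note also that merging the children's lists does not obviously help, since the sort key, the clipped $y$-value at the boundaries of $Int(v)$, changes from node to node, so a fresh sort per node seems unavoidable. A minor point to dispatch is the degenerate case of a horizontal (slope-$0$) sub-segment, whose two clipped endpoints share a $y$-value: it contributes a jump discontinuity to $\delta$ rather than a sloped piece, which the interpolation handles once ties are broken consistently; otherwise we inherit the distinct-coordinate general-position assumption made before the statement.
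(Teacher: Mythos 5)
Your construction is essentially the one in the paper: sort the $y$-coordinates of the (clipped) endpoints, store the prefix lengths $\delta(y_i)$ together with the increments $\Delta(y_i,y_{i+1})$, and answer a query by a predecessor search plus one linear interpolation, summing the per-node costs over $\mathcal{T}$. Your sharper amortized count $\sum_v|L(v)|=O(n\log n)$ via the insertion-path argument, and the explicit clipping of $L(v)$ to $Int(v)$, are details the paper leaves implicit, but they only tighten (and in the case of the sorting cost, actually justify) the stated $O(n\log^2 n)$ bounds rather than changing the approach.
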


%%%%%%%%%%% Type B

\paragraph*{Associated data structure for Type B nodes:} 
The associated data structure for a Type B node is built to handle the case when the query square $S=[x,x']\times [y,y']$ intersects either the left boundary $(x_l)$ or the right boundary ($x_r$) of $Int(v)$, but not both. The two cases are symmetric and we will only describe the case when $S$ intersects the right boundary.  

The data structure returns a value $M$ that is an upper bound on the length of the segments of $F(v)$ within $S$ and a lower bound on the segments within $S^+$, where $S^+$ is a slightly expanded version of $S$. See Figure~\ref{fig:Type B}(c). Formally, $S^+=[x-\varepsilon/8 \cdot |x_r-x|,x'+\varepsilon/8 \cdot |x_r-x|] \times [y-\varepsilon/8 \cdot |x_r-x|,y'+\varepsilon/8 \cdot |x_r-x|]$. 

If $S^+$ spans $Int(v)$ then we need to use a binary tree on $F(v)$ to answer the query in logarithmic time, similar to the associated data structure for Type A nodes. 

If $S^+$ does not span $Int(v)$ then the query is performed on the Type B associated data structures. We first show how to construct these data structures, then we show how to handle the query. Recall that all the segments in $F(v)$ span the interval $Int(v)$. Let $s_1, \ldots , s_m$ be the set of $m$ segment in $F(v)$ and let $\mu(s_i)$ be the angle of inclination\footnote{$\mu(s_i)$ is the $\arctan$ of the slope in the interval $[0,1)$.} of $s_i$. The angle of inclination for segments with slope in the interval $[0,1)$ is in the interval $[0,\pi/4)$. Partition $F(v)$ into $\kappa_1$ sets $F_1(v), \ldots, F_{\kappa_1}(v)$ such that for any segment $s_i \in F_j(v)$ it holds that $(j-1)\cdot \frac{\pi}{4\kappa_1} \leq \mu(s_i) < j\cdot \frac{\pi}{4\kappa_1}$.

Consider one such partition $F_j(v)=\{s_1^j, \ldots , s^j_{m_j}\}$. Build a balanced binary search tree $T^j_r$ on the $y$-coordinates of the right endpoints of the segments in $F_j(v)$. The data structure can be constructed in $O(m_j \log m_j)$ time using linear space. Given a $y$-interval $\ell$ as a query, the number of right endpoints in $T^j_r$ within $\ell$ can be reported in time $O(\log m_j)$.  From the above description and the fact that $\sum_{v \in T} F(v)=O(n \log n)$ it immediately follows that the total construction time for all the Type B nodes is $O(n \log^2 n)$ and the total amount of space required is $O(n\log n)$. This completes the construction of the data structures on $F(v)$.

It remains to show how to handle a query, i.e. how to compute $M$. We focus first on computing an $M$ that upper bounds $|F(v) \cap S|$, and we later prove that $M$ lower bounds $|F(v) \cap S^+|$. 
%\yuan{$M=\sum M_j, where M_j bounds |F_j(v) \cap S|$. There are two steps in computing $M_j$. The first step is to divide segments in $F_j$ that intersect $S$ into smaller groups such that we can estimate }
There are two steps in computing $M$. The first step is to count the number of segments that intersect $S$. The second step is to multiply this count by the \emph{maximum} possible length of intersection between the segment and $S$. This would clearly yield an upper bound on $|F(v) \cap S|$. To obtain suitable maximum lengths in the second step, we need to subdivide the partitions $F_j$ further.

%It remains to show how to handle a query, given the data structures $T^j_r$ defined above. Focus on the right endpoints of segments in $F_j$ that intersect $S$. See Figure~\ref{fig:Type B}(a). All such right endpoints must lie in a $y$-interval $I_j=[y,y'+\bar{y}]$, where $y,y'$ are the $y$-coordinates of the bottom and top boundaries of $S$, and $\bar{y}=(x_r-x) \cdot \tan(\frac{j\cdot \pi}{4\kappa_1})$. 

The right endpoints must lie in a $y$-interval given by $I_j=[y,y'+\bar{y}]$, where $y,y'$ are the $y$-coordinates of the bottom and top boundaries of $S$, and $\bar{y}=(x_r-x) \cdot \tan(\frac{j\cdot \pi}{4\kappa_1})$. Subdivide $I_j$ into three subintervals: $I^1_j=[y,y+\bar{y})$, $I^2_j=[y+\bar{y},y')$ and $I_j^3=[y',y'+\bar{y})$, see Fig.~\ref{fig:Type B}(b). Further subdivide $I^1_j$ and $I^3_j$ into $2\kappa_2$ subintervals of $\ell^1_1, \ldots , \ell^1_{\kappa_2}$ and $\ell^3_1, \ldots , \ell^3_{\kappa_2}$ of equal length. Hence we partitioned $I_j$ into a set $L_j$ of $2\kappa_2+1$ subintervals. 

%We partition $I_j$ into three subintervals: $I^1_j=[y,y+\bar{y})$, $I^2_j=[y+\bar{y},y')$ and $I_j^3=[y',y'+\bar{y})$, see Figure~\ref{fig:Type B}(b). Further partition $I^1_j$ and $I^3_j$ into $2\kappa_2$ subintervals of $\ell^1_1, \ldots , \ell^1_{\kappa_2}$ and $\ell^3_1, \ldots , \ell^3_{\kappa_2}$ of equal length. Note that we partitioned $I_j$ into a set $L_j$ of $2\kappa_2+1$ subintervals. 

Given these subdivisions $L_j$, our first step is to simply perform a range counting query in $T^j_r$ for each $\ell \in L_j$. Our second step is to multiply this count by the maximum length of intersection between $S$ and any segment in $F_j$ with its right endpoint in $\ell$. The product of these two values is clearly an upper bound on the length of intersection between $S$ and segments in $F_j$ with right endpoint in $\ell$. Finally, we sum over all subdivisions $\ell \in F_j$ and then over all partitions $F_j$ to obtain a value $M$ that upper bounds $|S \cap F(v)|$. The time required to handle a query is $O(\kappa_1 \cdot \kappa_2 \cdot \log m)$. It remains only to prove $M \leq |F(v)\cap S^+|$.

%For each subinterval $\ell \in L_j$ of $I_j$ perform a range counting query in $T^j_r$ that returns the number $a_{F_j}(\ell)$ of points in the interval, this number corresponds to the number of segments in $F_j$ with its right endpoint in $\ell$. For each subinterval $\ell \in L_j$ one can also calculate the \emph{maximum} length of intersection between any possible segment in $F_j$ and $S$, denote this length by $b_{F_j}(\ell)$. Set $M_j(\ell)=a_{F_j}(\ell) \cdot b_{F_j}(\ell)$, which is an upper bound on the length of intersection between the segments in $F_j$ having their right endpoint in $\ell$ and $S$. The final value returned from the associated data structure is $M=\sum_{j=1}^{\kappa_1} \sum_{\ell \in L_j} M_j(\ell).$ 

%The time required to handle a query is $O(\kappa_1 \cdot \kappa_2 \cdot \log m)$. Clearly the value $M$ is an upper bound on $|F(v)\cap S|$. It remains to prove that $M$ is a lower bound on $|F(v)\cap S^+|$.

\begin{figure}[htb]
\centering
\includegraphics[width=12cm]{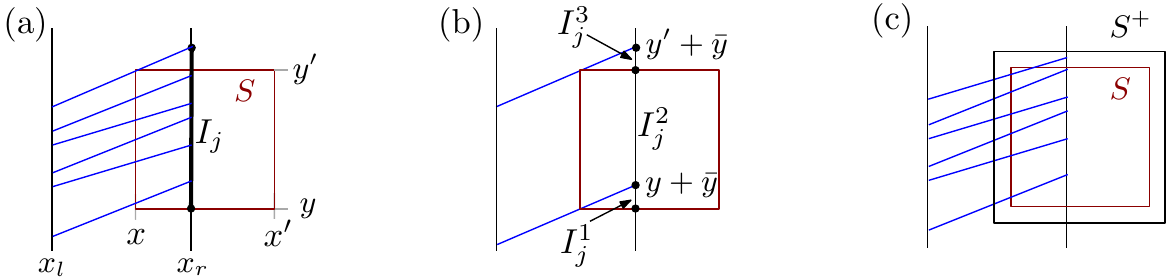}
\caption{(a) A query $S$ and the set $F(v)$. (b) Illustrating the three interval $I_j^1$, $I_j^2$ and $I_j^3$. (c) The expanded square $S^+$.}
\label{fig:Type B}
\end{figure}

By setting $\kappa_1=16\sqrt{2}/\varepsilon$ and $\kappa_2=16/\varepsilon$ we can prove the following.
\begin{lemma} \label{lemma:geometry}
$M\leq |F(v) \cap S^+|$ 
\end{lemma}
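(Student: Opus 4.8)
The plan is to prove $M\le |F(v)\cap S^+|$ by a charging argument that pays off the overestimate built into $M$ with honest curve length captured inside the enlarged square. Recall that $M$ is assembled bucket by bucket: for each angle class $F_j$ and each subinterval $\ell\in L_j$, the range query in $T^j_r$ returns the number $n_{j,\ell}$ of segments of $F_j$ whose right endpoint lies in $\ell$, and this count is multiplied by the maximum $\lambda_{j,\ell}$ of $|s\cap S|$ over the segments in that bucket, so that $M=\sum_{j,\ell} n_{j,\ell}\,\lambda_{j,\ell}$. Since the segments of $\pi$ are interior-disjoint and each counted segment falls into exactly one bucket (its angle fixes $j$ and its right endpoint fixes $\ell$), it suffices to establish the per-segment inequality $|s\cap S^+|\ge \lambda_{j,\ell}$ for \emph{every} segment $s\in F_j$ with right endpoint in $\ell$, including those that barely meet or entirely miss $S$. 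Summing this over all counted segments yields $M\le \sum_s |s\cap S^+|\le |F(v)\cap S^+|$, which is the lemma.

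The whole statement thus reduces to a single geometric claim: if $s$ and $s^*$ both lie in $F_j$ with right endpoints in the same $\ell$, then $|s\cap S^+|\ge |s^*\cap S|$. Two such segments differ only in two controlled quantities, namely their angles of inclination (differing by less than $\tfrac{\pi}{4\kappa_1}$) and the heights of their right endpoints recorded on the line $x=x_r$ (differing by at most the length of $\ell$). Because every segment of $F(v)$ spans $Int(v)$ and its right endpoint is read off on $x=x_r$, the horizontal lever arm relating that endpoint to the left side $x$ of the query is exactly $|x_r-x|$; this is the length scale against which both discretisation errors \emph{and} the enlargement of $S^+$ are measured. I would first convert each error into a vertical displacement over this lever arm: with $\kappa_1=16\sqrt{2}/\varepsilon$ the angular error contributes at most $|x_r-x|\cdot \sec^2(\tfrac{\pi}{4})\cdot \tfrac{\pi}{4\kappa_1}=|x_r-x|\cdot\tfrac{\pi\varepsilon}{32\sqrt{2}}$, and with $\kappa_2=16/\varepsilon$ each refined piece $\ell\subseteq I_j^1\cup I_j^3$ has length at most $\bar y/\kappa_2\le |x_r-x|/\kappa_2=\tfrac{\varepsilon}{16}|x_r-x|$. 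I would then check that each of these is at most $\tfrac{\varepsilon}{8}|x_r-x|$, precisely the amount by which $S^+$ enlarges $S$ in every direction.

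The geometric claim itself I would prove along the three sub-intervals of $I_j$. For a right endpoint in $I_j^2$ the chord stays inside the $y$-band of $S$ across the whole lever arm, so its length there equals $|x_r-x|/\cos\mu$ and is \emph{independent of the endpoint height}; this is exactly why $I_j^2$ need not be refined. The only deficit relative to $s^*$ comes from the smaller inclination of $s$, and it is recovered because inside $S^+$ the chord is free to run an extra $\tfrac{\varepsilon}{8}|x_r-x|$ horizontally, after which a short computation gives $(1+\varepsilon/8)/\cos\mu_s\ge 1/\cos\mu_{s^*}$ for angles within one $\tfrac{\pi}{4\kappa_1}$-class. For right endpoints in $I_j^1$ or $I_j^3$ the chord clips the bottom or the top of $S$, so its length is sensitive to the endpoint height; here the refinement into $\kappa_2$ pieces makes $s$ and $s^*$ stay within the vertical displacement bounded above, so that the enlargement of $S^+$ keeps $s$ inside the square while restoring the maximal length $\lambda_{j,\ell}$.

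The main obstacle is the geometric bookkeeping of this last step rather than any single hard idea. The delicate point is that $\lambda_{j,\ell}$ may be realised by a \emph{steeper}, hence longer, segment than the one being charged, so merely intersecting the shorter segment with the same square cannot suffice; one must show that the single enlargement $S^+$ simultaneously (i) keeps the perturbed chord inside the square despite the drift caused by the differing slope and endpoint height, and (ii) supplies enough extra horizontal (for $I_j^2$) or vertical (for $I_j^1,I_j^3$) room to make up the length deficit from the smaller inclination. The quantitative crux is that the relevant scale is the lever arm $|x_r-x|$ and not the full width $|x'-x|$ of $S$, which is what makes the $\tfrac{\varepsilon}{8}|x_r-x|$ enlargement proportionate to the errors; I expect the corner cases of $I_j^1$ and $I_j^3$, where the chord passes exactly through a vertex of $S$, to require the most careful handling.
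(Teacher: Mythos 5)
Your proposal follows essentially the same route as the paper: reduce $M\leq|F(v)\cap S^+|$ to the per-bucket claim that the maximising segment's intersection with $S$ is at most any bucket member's intersection with $S^+$, then verify this by a case analysis on whether the right endpoint lies in $I_j^1$, $I_j^2$ or $I_j^3$, paying the angular error (controlled by $\kappa_1$) and the endpoint-height error (controlled by $\kappa_2$) against the $\tfrac{\varepsilon}{8}|x_r-x|$ enlargement of $S^+$. Your write-up is in fact slightly more careful than the paper's in making the charging step $M=\sum_{j,\ell}n_{j,\ell}\lambda_{j,\ell}\leq\sum_s|s\cap S^+|$ explicit and in flagging that counted segments may miss $S$ entirely, but the decomposition, the constants and the geometric content are the same.
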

\begin{proof}
Consider an arbitrary interval $\ell \in L_j$. Let $A$ be a possible segment in $F_j(v)$ with right endpoint on $\ell$ that maximises $|A\cap S|$, and let $B$ be any segment in $F_j(v)$ with right endpoint on $\ell$.  It suffices to prove that $|A\cap S| \leq |B \cap S^+|$. We will have three cases depending on the position of $\ell$. Let $B'=B\cap S$ and let $B''=B\cap (S^+\setminus S)$.
\begin{itemize}
    \item $\ell \in I_j^1$: If the right endpoint of $B$ lies above $A$ then we can move $B$ vertically downward until the right endpoint coincides with the right endpoint of $A$. This will not increase $|B\cap S^+|$. 
    
    If $B$ has a smaller angle of inclination than $A$ then $|A \cap S|-|B'| \leq \frac{\sqrt{2} }{\kappa_1} \cdot |x_r-x|=\frac{\varepsilon}{16} \cdot |x_r-x|$. However, the length of $B''$ is at least $\varepsilon/8 \cdot |x_r-x|$, hence 
    $$|B \cap S^+| = |B'|+|B''| \geq |A \cap S|-\frac{\varepsilon}{16} \cdot |x_r-x|+\frac{\varepsilon}{8} \cdot |x_r-x| > |A \cap S|.$$
    
    If $B$ has a greater angle of inclination than $A$ then let $p$ be the left endpoint of $A\cap S$. Let $x_p$ be the $x$-coordinate of $p$ and let $p'$ be the point on $B$ with $x$-coordinate $x_p$. The distance between $p$ and $p'$ is bounded by $(\frac{1}{\kappa_2}+\frac{1}{\kappa_1}) \cdot |x_r-x| < \frac{\varepsilon}{8} \cdot |x_r-x|$, hence $p' \in S^+$, and it immediately follows that $|B\cap S^+| >  |A \cap S|$.
    \item $\ell \in I_j^2$: In this case the left endpoint of $A\cap S$ will be on the left boundary of $S$. Similarly, the left endpoint of $B'$ will lie on the left boundary of $S$. We know that $|A \cap S|-|B'| \leq \frac{\sqrt{2}}{\kappa_1} \cdot |x_r-x|=\frac{\varepsilon}{16} \cdot |x_r-x|$. However, the length of $B''$ is at least $\varepsilon/8 \cdot |x_r-x|$, hence 
    $$|B \cap S^+| \geq |B'| +|B''| \geq |A \cap S|-\frac{\varepsilon}{16} \cdot |x_r-x| +\varepsilon/8 \cdot |x_r-x| > |A \cap S|.$$
    \item $\ell \in I_J^3$: This case is very similar to the first case and left as an exercise for the reader. 
\end{itemize}
This shows that $|A\cap S| \leq |B \cap S^+|$ when $\kappa_1\leq 16\sqrt{2}/\varepsilon$ and $\kappa_2\leq 16/\varepsilon$, which proves the lemma.
\end{proof}

We summarise the associated data structures for Type B nodes with the following lemma.

\begin{lemma} \label{lemma:Type-B}
The associated data structure for Type B nodes can be constructed in $O(n \log^2 n)$ time using $O(n\log n)$ space. Given a query square $S=[x,x']\times [y,y']$ for an associated data structure of Type B stored in an internal node $v$, a real value $M(v)$ is returned in $O(\frac{1}{\varepsilon^2} \cdot \log n)$ time such that:
     $$|F(v) \cap S| \leq M(v) \leq |F(v) \cap S^+|,$$
where $S^+=([x-\varepsilon/8 \cdot w,x'+\varepsilon/8 \cdot w]) \times [y-\varepsilon/8 \cdot w,y'+\varepsilon/8 \cdot w]$ and $w$ is the width of $S\cap Int(v)$.
\end{lemma}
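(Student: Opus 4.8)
The plan is to prove Lemma~\ref{lemma:Type-B} by combining the construction-cost accounting, the query-time bound, and the two-sided inequality into a single statement, drawing the geometric core directly from Lemma~\ref{lemma:geometry}. First I would establish the resource bounds. The construction consists, for each internal node $v$, of partitioning $F(v)$ into $\kappa_1 = O(1/\varepsilon)$ angular buckets and building one balanced binary search tree $T^j_r$ on the right-endpoint $y$-coordinates of each bucket. Each tree costs $O(m_j \log m_j)$ time and $O(m_j)$ space, so summing over buckets gives $O(|F(v)| \log |F(v)|)$ time and $O(|F(v)|)$ space per node. Since $\sum_{v \in \mathcal{T}} |F(v)| = O(n \log n)$ (each segment of $\pi$ is stored in $O(\log n)$ canonical subsets along a root-to-leaf path), the total construction time is $O(n \log^2 n)$ and the total space is $O(n \log n)$, as claimed.

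Next I would bound the query time. Given $S = [x,x'] \times [y,y']$, the procedure loops over the $\kappa_1 = 16\sqrt{2}/\varepsilon$ angular buckets, and within each bucket subdivides the target $y$-interval $I_j$ into $2\kappa_2 + 1 = O(1/\varepsilon)$ subintervals $L_j$, issuing one range-counting query per subinterval into $T^j_r$ at cost $O(\log m)$ each. The total is therefore $O(\kappa_1 \cdot \kappa_2 \cdot \log n) = O\!\left(\frac{1}{\varepsilon^2} \log n\right)$, matching the stated bound. (The degenerate case where $S^+$ spans $Int(v)$ is handled by a single binary-tree query in $O(\log n)$ time, which is dominated by this bound.)

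The two-sided inequality is where the work lies, and it decomposes cleanly. For the lower bound $|F(v) \cap S| \leq M(v)$: each segment of $F(v)$ meeting $S$ has its right endpoint in some subinterval $\ell \in L_j$, where $j$ is its angular bucket; the count returned by the range query for $\ell$ includes it, and we multiply that count by the \emph{maximum} intersection length $|A \cap S|$ achievable by any segment in $F_j$ with right endpoint on $\ell$. Since each contributing segment's true intersection length is at most this maximum, summing over all $\ell$ and all $j$ yields $M(v) \geq |F(v) \cap S|$. For the upper bound $M(v) \leq |F(v) \cap S^+|$, I would invoke Lemma~\ref{lemma:geometry} directly: with $\kappa_1 = 16\sqrt{2}/\varepsilon$ and $\kappa_2 = 16/\varepsilon$, that lemma establishes $|A \cap S| \leq |B \cap S^+|$ for the maximizing segment $A$ and any segment $B$ in the same bucket with right endpoint on the same $\ell$. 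Hence the per-$\ell$ contribution $(\text{count}) \cdot |A \cap S|$ is at most the sum of the $|B \cap S^+|$ over the counted segments $B$, and summing over $\ell$ and $j$ gives $M(v) \leq |F(v) \cap S^+|$, since the subintervals partition the relevant endpoint range and each segment is counted exactly once.

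The main obstacle I anticipate is the bookkeeping required to confirm that every segment of $F(v)$ meeting $S$ (or $S^+$) is counted \emph{exactly once} and in the \emph{correct} bucket: one must verify that the $y$-interval $I_j = [y, y' + \bar y]$ with $\bar y = (x_r - x)\tan\!\left(\frac{j\pi}{4\kappa_1}\right)$ really captures all right endpoints of bucket-$j$ segments crossing $S$, that the three-way split $I_j^1, I_j^2, I_j^3$ together with the finer $\kappa_2$-subdivision tiles $I_j$ without overlap, and that the width parameter $w = |S \cap Int(v)|$ coincides with the $|x_r - x|$ used throughout Lemma~\ref{lemma:geometry}. Everything else reduces to invoking Lemma~\ref{lemma:geometry} and the $\sum_v |F(v)| = O(n\log n)$ accounting.
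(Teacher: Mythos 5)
Your proposal is correct and follows essentially the same route as the paper, which proves this lemma implicitly through the preceding construction: the $\sum_v |F(v)| = O(n\log n)$ accounting for the resource bounds, the $O(\kappa_1\kappa_2\log n)$ query decomposition into angular buckets and endpoint subintervals, the count-times-maximum-length argument for $|F(v)\cap S|\leq M(v)$, and Lemma~\ref{lemma:geometry} for $M(v)\leq |F(v)\cap S^+|$. The bookkeeping points you flag (that $I_j$ captures all relevant right endpoints, that the subintervals tile $I_j$, and that $w=|x_r-x|$) are exactly the details the paper also leaves to the reader.
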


%%%%%%%%%%% Type C

\paragraph*{Associated data structure for Type C nodes:} 
The associated data structure for Type C nodes have some similarities with the associated data structures for Type B nodes, however, it is a much harder case since we cannot use the ordering on the $y$-coordinates of the right endpoints like for the Type B nodes. Instead we will precompute an approximation of $|F(v) \cap S|$ for every internal node $v$ in $\mathcal{T}$ and every square $S \in \mathcal{S}$ that lies entirely within $Int(v)$. Recall from Section~\ref{ssec:linear-number} that $\mathcal S$ is a set of size $O(n/\varepsilon^2)$ that is guaranteed to have a square that is a $(6+\varepsilon/8)$-approximation.

Let $\mathcal{S}(v)$ denote the subset of squares in $\mathcal{S}$ that lie entirely within $Int(v)$. The stored value for a square $S \in \mathcal{S}(v)$ is an upper bound on $|F(v) \cap S|$ and a lower bound on $|F(v)\cap S^+|$, where $S^+$ is the same as the one defined for Type B nodes. See Figure~\ref{fig:Type B}(c).
% $S^+=([x-\varepsilon/8 \cdot |x'-x|,x'+\varepsilon/8 \cdot |x'-x|]) \times [y-\varepsilon/8 \cdot |y'-y|,y'+\varepsilon/8 \cdot |y'-y|]$.
If $S^+$ intersects the left or right boundary of $Int(v)$ then perform the query as a Type B node instead of a Type C node. 
% The values will be stored in a binary search tree ordered with respect to the squares (e.g. the id of the squares).

The associated data structure will be built on the set $F(v)$, hence, all segments will span the interval $Int(v)$. Let $s_1, \ldots , s_m$ be the segments in $F(v)$ and let $\mu(s_i)$ be the angle of inclination of $s_i$. Partition $F(v)$ into $\kappa_1$ sets $F_1(v), \ldots, F_{\kappa_1}(v)$ such that for any segment $s_i \in F_j(v)$ it holds that $(j-1)\cdot \frac{\pi}{4\kappa_1} \leq \mu(s_i) < j\cdot \frac{\pi}{4\kappa_1}$, with $\kappa_1=16\sqrt{2}/\varepsilon$. 

%If we set $\kappa_1=16\sqrt{2}/\varepsilon$ and $\kappa_2 = 16/\varepsilon$, as in the case for Type B nodes, then we get the following lemma (proof available in Appendix~\ref{app:type-C}).

Using a combination of the approach we used for Type B nodes and a result by Agarwal~\cite{PartitionLineArrangements} we can prove the following lemma.

\begin{lemma} \label{lemma:type-C-sub}
Given a set $F_j(v)=\{s^j_1, \ldots , s^j_{n_1}\}$ of line segments (as defined above) and a set $\mathcal{S}(v)=\{S_1, \ldots , S_{n_2}\}$ of squares lying entirely within $Int(v)$, one can compute, in $O((n_1+n_2/\varepsilon)^{4/3} \polylog (n_1+n_2/\varepsilon))$ time using $O((n_1+n_2/\varepsilon)^{4/3}/ \log^{(2w+1)/3} (n_1+n_2/\varepsilon))$ space, where $w$ is a constant $<3.33$, a set of $n_2$ real values $\{M^j_1(v), \ldots , M^j_{n_2}(v)\}$ such that for every $i$, $1\leq i \leq n_2$ the following holds:
  $$|F_j(v) \cap S_i| \leq M^j_i(v) \leq |F_j(v) \cap S_i^+|.$$
\end{lemma}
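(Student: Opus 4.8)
The plan is to reduce the computation of the values $M_i^j(v)$ to a batched halfplane range-counting problem and then invoke Agarwal's partitioning result~\cite{PartitionLineArrangements}. The starting point is the same length-estimation idea used for Type B nodes: approximate $|F_j(v)\cap S_i|$ by partitioning the crossing segments into groups of nearly equal intersection length, counting the segments in each group, and multiplying each count by the maximum intersection length achievable in that group. Two features make this clean here. First, every segment of $F(v)$ spans $Int(v)$ and every square $S_i\in\mathcal{S}(v)$ lies entirely inside $Int(v)$, so a segment has no endpoint inside the vertical strip of $S_i$; hence every segment that meets $S_i$ crosses it completely, entering through one side and leaving through another. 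Second, all segments in $F_j(v)$ have angle of inclination within a band of width $\frac{\pi}{4\kappa_1}$, so within a group their intersection lengths with a fixed square differ only by an $O(\varepsilon)$ fraction of the square's side.

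For a fixed square $S_i=[x_i,x_i']\times[y_i,y_i']$, parametrise each segment of $F_j(v)$ by the height $h_R$ at which its supporting line crosses the right edge $x=x_i'$; this plays exactly the role that the $y$-coordinate of the right endpoint played for Type B nodes. As in the Type B argument, a segment crosses $S_i$ iff $h_R$ lies in an interval $J_i$ of length $y_i'-y_i+\bar{y}$, where $\bar{y}$ is the rise of the band over the width $x_i'-x_i$, and I would subdivide $J_i$ into the three pieces $I_i^1,I_i^2,I_i^3$ and refine $I_i^1,I_i^3$ into $2\kappa_2$ equal subintervals, giving $O(\kappa_2)=O(1/\varepsilon)$ subintervals $\ell$. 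For each $\ell$ I let $c_\ell$ be the number of segments of $F_j(v)$ with $h_R\in\ell$ and $\Lambda_\ell$ the maximum of $|A\cap S_i|$ over segments $A\in F_j(v)$ with $h_R\in\ell$, and I set $M_i^j(v)=\sum_\ell c_\ell\cdot\Lambda_\ell$. The key observation for efficiency is that $c_\ell$ is a difference of two threshold counts of the form $\#\{\,s\in F_j(v):h_R(s)\leq t\,\}$, and the condition $h_R(s)\leq t$ is precisely that the supporting line of $s$ passes below the point $(x_i',t)$. Thus each $c_\ell$ reduces to two halfplane (point-below-line) counting queries.

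Over all $n_2$ squares this produces $O(n_2\kappa_2)=O(n_2/\varepsilon)$ point-below-line counting queries against the $n_1$ lines supporting $F_j(v)$. I would answer this batch with Agarwal's simplicial-partition data structure for line arrangements~\cite{PartitionLineArrangements}, which handles $N$ such queries against $n$ lines in $O((n+N)^{4/3}\polylog(n+N))$ time and $O((n+N)^{4/3}/\log^{(2w+1)/3}(n+N))$ space; substituting $n=n_1$ and $N=O(n_2/\varepsilon)$ yields exactly the stated bounds. This batched-counting step, and in particular phrasing each ``a segment crosses a sub-region of a square'' event as a point-below-line incidence count that Agarwal's machinery can process in aggregate, is the part I expect to be the main obstacle; everything else is bookkeeping.

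Finally, correctness of the sandwich $|F_j(v)\cap S_i|\leq M_i^j(v)\leq|F_j(v)\cap S_i^+|$ follows exactly as in Lemma~\ref{lemma:geometry}. The upper bound is immediate: every crossing segment is counted in the subinterval $\ell$ containing its $h_R$, and its true length is at most $\Lambda_\ell$, so summing over all crossing segments gives $|F_j(v)\cap S_i|\leq M_i^j(v)$; extra terms from boundary segments that are counted but barely miss $S_i$ only increase $M_i^j(v)$. For the lower bound, the choices $\kappa_1=16\sqrt{2}/\varepsilon$ and $\kappa_2=16/\varepsilon$ guarantee, by the same case analysis as in Lemma~\ref{lemma:geometry}, that for every segment $B$ counted in a subinterval $\ell$ one has $\Lambda_\ell\leq|B\cap S_i^+|$, whence $M_i^j(v)=\sum_\ell c_\ell\Lambda_\ell\leq\sum_\ell\sum_{B:\,h_R(B)\in\ell}|B\cap S_i^+|=|F_j(v)\cap S_i^+|$. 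Combining the batched-counting bound with this correctness argument completes the lemma.
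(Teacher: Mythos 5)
Your proof is correct and reaches the stated bounds, but it swaps out the paper's batched counting primitive for a different one, so a comparison is worthwhile. Both arguments share the same template: restrict to one slope band of width $\pi/(4\kappa_1)$, subdivide each square's ``entry parameter'' into $O(\kappa_2)=O(1/\varepsilon)$ groups, count the segments of $F_j(v)$ per group, multiply each count by the maximum intersection length achievable within that group, and certify $|F_j(v)\cap S_i|\le M_i^j(v)\le |F_j(v)\cap S_i^+|$ by the case analysis of Lemma~\ref{lemma:geometry}. The paper realizes the counting step geometrically: it partitions the left and bottom sides of each square into $2\kappa_2$ physical subsegments, notes that a positively-sloped segment spanning $Int(v)$ crosses exactly one of them, and feeds the resulting $2\kappa_2 n_2$ red subsegments and $n_1$ blue segments into Agarwal's red--blue intersection counting. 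You instead linearize membership: since every segment of $F_j(v)$ spans the vertical strip of $S_i$, its position is captured by the height $h_R$ of its supporting line at $x=x_i'$, each group is an $h_R$-interval, and each group count is a difference of two ``how many lines pass below $(x_i',t)$'' queries, i.e.\ $O(n_2/\varepsilon)$ batched point-below-line queries against $n_1$ lines. Both reductions land in the same $O((n_1+n_2/\varepsilon)^{4/3}\polylog(n_1+n_2/\varepsilon))$ regime; yours uses the arguably simpler primitive (halfplane/dominance counting) and avoids reasoning about which boundary side a segment enters through, at the cost of counting a few segments whose $h_R$ lies in $J_i$ but which miss $S_i$ --- which, as you note, is absorbed by the $S_i^+$ slack exactly as in the paper's Type~B analysis. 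Two small points to tighten: (i) the result the paper quotes from~\cite{PartitionLineArrangements} is red--blue segment intersection counting, so either add the one-line reduction from a point-below-line query to a crossing count (hang a long vertical red segment below each query point) or cite a batched halfplane range counting bound directly; (ii) your formal definition of $\Lambda_\ell$ as a maximum over actual segments $A\in F_j(v)$ should be the maximum over \emph{possible} segments in the slope band with $h_R\in\ell$ (computable in $O(1)$ from the geometry), as in the paper's $b_{F_j}(\ell)$ and as your own first paragraph describes --- otherwise the precomputation of $\Lambda_\ell$ would itself be expensive.
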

\begin{proof}
For each square $S\in \mathcal{S}(v)$ partition the left side and the bottom side of $S$ into $2\kappa_2$ subsegments of equal length, where $\kappa_2 = 16/\varepsilon$. Note that a line segment in $F_j(v)$ can at most intersect one of the subsegments since the segments have positive slopes. 

Agarwal~\cite{PartitionLineArrangements} showed that given a set of $n_r$ red line segments and another set of $n_b$ blue line segments, one can count, for each red segment, the number of blue segments intersecting it in overall time $O(n^{4/3} \log^{(w+2)/3} n)$ using $O(n^{4/3}/ \log^{(2w+1)/3} n)$ space, where $n=n_r+n_b$ and $w$ is a constant $<3.33$. Note that there are faster algorithms, e.g.~\cite{c-chdc-93}, but to the best of the authors knowledge they do not return the number of red-blue intersection for each red segment.

Let the subsegments along the lower and left side of every square $S \in \mathcal{S}(v)$ be our red set of segments, hence $n_r=2\kappa_2 \cdot n_2$. Let the $n_b=n_1$ segments in $F_j(v)$ be our blue segments. Applying the algorithm by Agarwal~\cite{PartitionLineArrangements} immediately gives that in $O((n_1+\kappa_2\cdot n_2) \polylog (n_1+\kappa_2\cdot n_2))$ time we can compute the number of segments in $F_j(v)$ that intersect each subsegment of the squares in $\mathcal{S}(v)$.

For a subsegment $\ell$ of $S_i \in \mathcal{S}(v)$, let $a_{F_j}(\ell)$ be the number of segments in $F_j(v)$ that intersects $\ell$ and let $b_{F_J}(\ell)$ be the maximum length intersection between a possible segment in $F_j(v)$ and $S_i$. Set $M_i^j(\ell)=a_{F_j}(\ell) \cdot b_{F_j}(\ell)$, which is an upper bound on the total length of intersection between the segments in $F_j$ intersecting $\ell$ of $S_i$. To get an upper bound on the total length of intersection between the segments in $F_j(v)$ and $S_i$, denoted $M^j_i(v)$, we simply sum the $M^j_i(\ell)$-values for all subsegments $\ell$ of $S_i$. This is performed for each square $S_i \in \mathcal{S}(v)$, $1\leq i \leq n_2$.

Using a similar analysis as in Lemma~\ref{lemma:geometry} we get
$$|s \cap S_i| \leq b_{F_j}(\ell)  \leq |s \cap S_i^+|,$$  
and
$$|F_j(v) \cap S_i| \leq M^j_i(v) \leq |F_j(v) \cap S_i^+|,$$
which proves the lemma.
\end{proof}

For each set $F_j(v)$ apply Lemma~\ref{lemma:type-C-sub}, and for each square $S_i \in S(v)$ precompute the value $M_i = \sum_{j=1}^{\kappa_1} M_i^j(v)$. Store all the squares lying within the interval $Int(v)$ in a balanced binary search tree along with their precomputed values $M_i$. 
Note that each square of $\mathcal S$ can appear at most once on each level of the primary segment tree structure $\mathcal{T}$.
Furthermore, an edge $s \in \pi$ can straddle at most two intervals on a level, as a result we get that the total amount of time spent on one level of the segment tree to build the Type C associated data structures is $O((n/\varepsilon^3)^{4/3} \log^{(w+2)/3} (n/\varepsilon))$.
%\yuan{$\log^{(w+2)/3}$}. 
Since the number of levels in $\mathcal{T}$ is $O(\log n)$ the total time required to build all the Type C associated data structures is $O((n/\varepsilon^3)^{4/3} \log^{(w+5)/3} (n/\varepsilon))$. 
%We can store all the $M_i$'s in $O(n)$ space per level, hence $O(n \log n)$ in total.
Putting all the pieces together we get:

\begin{lemma}
The Type C associated data structures for $\mathcal{T}$ can be computed in time %$O((n/\varepsilon^3)^{4/3} \log^{(w+5)/3} (n/\varepsilon))$ 
$O((n/\varepsilon^3)^{4/3} \polylog (n/\varepsilon))$ 
using 
%$O((n/\varepsilon^3)^{4/3} / \log^{(2w+4)/3} (n/\varepsilon))$
$O(n^{4/3}/\varepsilon^4)$
space. Given a query square $S\in \mathcal{S}(v)$, a value $M(v)$ can be returned in $O(\log (n/\varepsilon^2))$ time such that:
 $$|F(v) \cap S| \leq M(v) \leq |F(v) \cap S^+|.$$
\end{lemma}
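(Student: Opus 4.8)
The plan is to assemble the final summary lemma for Type C nodes from the per-partition result in Lemma~\ref{lemma:type-C-sub}, then bound the aggregate construction cost, space, and query time. The correctness of the returned value $M(v)$ is essentially immediate: for a fixed node $v$ and a square $S\in\mathcal{S}(v)$ we set $M(v)=\sum_{j=1}^{\kappa_1}M^j_i(v)$, and summing the per-partition guarantee $|F_j(v)\cap S|\leq M^j_i(v)\leq |F_j(v)\cap S^+|$ over the disjoint partitions $F_1(v),\ldots,F_{\kappa_1}(v)$ (which together form $F(v)$) gives $|F(v)\cap S|\leq M(v)\leq |F(v)\cap S^+|$. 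So the bulk of the work is the resource analysis, not the approximation argument.

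First I would fix the parameters $\kappa_1=16\sqrt{2}/\varepsilon$ and $\kappa_2=16/\varepsilon$ and invoke Lemma~\ref{lemma:type-C-sub} once per partition $F_j(v)$. The main obstacle is bounding the \emph{total} construction cost across the whole segment tree rather than at a single node, and this is where the charging argument matters. The key structural facts are: (i) each square of $\mathcal{S}$ lands in $\mathcal{S}(v)$ for at most one node $v$ per level of $\mathcal{T}$, so $\sum_{v \text{ on a level}} n_2(v) = O(|\mathcal{S}|)=O(n/\varepsilon^2)$; and (ii) each edge of $\pi$ is stored in $F(v)$ for at most two nodes per level (it straddles at most two intervals), so $\sum_{v \text{ on a level}} n_1(v)=O(n)$. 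Because the bound in Lemma~\ref{lemma:type-C-sub} is of the form $(n_1+n_2/\varepsilon)^{4/3}\polylog$, I would use superadditivity of the exponent-$4/3$ cost only after substituting the aggregated arguments: on one level the dominant term is $\big((n+ (n/\varepsilon^2)\cdot \kappa_2)\big)^{4/3}=\big(n/\varepsilon^3\big)^{4/3}$ up to polylog factors, matching the stated per-level cost $O((n/\varepsilon^3)^{4/3}\log^{(w+2)/3}(n/\varepsilon))$.

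Then I would multiply by the $O(\log n)$ levels of $\mathcal{T}$ to obtain the total construction time $O((n/\varepsilon^3)^{4/3}\log^{(w+5)/3}(n/\varepsilon))$, which is absorbed into $O((n/\varepsilon^3)^{4/3}\polylog(n/\varepsilon))$. For the space bound I would similarly aggregate the $O((n_1+n_2/\varepsilon)^{4/3}/\log^{(2w+1)/3}(\cdot))$ term from Lemma~\ref{lemma:type-C-sub} across nodes and levels; since space need not be re-summed over levels if the construction is done level-by-level and discarded, the relevant figure is the peak, yielding the claimed $O(n^{4/3}/\varepsilon^4)$ after charging the $\kappa_2=O(1/\varepsilon)$ subsegment blow-up into the $n_2$ term. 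The final piece is the query time: since at most one node per level is queried as a Type C node, and each such query is a single lookup of the precomputed $M_i$ in the balanced binary search tree over the $O(n/\varepsilon^2)$ squares stored at $v$, the query costs $O(\log(n/\varepsilon^2))$ time, as stated.

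The step I expect to be genuinely delicate is the aggregation of the $4/3$-power costs, because one cannot bound $\sum_v (n_1(v)+n_2(v)/\varepsilon)^{4/3}$ term-by-term and then pretend it collapses — the correct move is to observe that the cost function at a level is dominated by evaluating it once on the summed quantities, which is valid precisely because $x\mapsto x^{4/3}$ is convex and the per-level sums $\sum n_1(v)$ and $\sum n_2(v)$ are $O(n)$ and $O(n/\varepsilon^2)$ respectively. I would make sure the polylog exponents are tracked honestly (the extra $\log^3$ from summing over $O(\log n)$ levels raising $(w+2)/3$ to $(w+5)/3$), and confirm that folding the $\kappa_1=O(1/\varepsilon)$ partitions per node contributes only an $O(1/\varepsilon)$ factor that is subsumed by the $\polylog(n/\varepsilon)$ and the $\varepsilon$-powers already present.
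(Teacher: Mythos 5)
Your proposal matches the paper's argument essentially step for step: sum the per-partition guarantees of Lemma~\ref{lemma:type-C-sub} over the disjoint sets $F_1(v),\ldots,F_{\kappa_1}(v)$ for correctness, aggregate the construction cost per level using the facts that each square of $\mathcal{S}$ appears at most once per level and each edge straddles at most two intervals per level, multiply by the $O(\log n)$ levels, and answer a query by a single lookup in the balanced binary search tree over $\mathcal{S}(v)$. Your explicit appeal to superadditivity of $x\mapsto x^{4/3}$ for the per-level aggregation, and your observation that the large $(n_1+n_2/\varepsilon)^{4/3}$ space is only working space for the intersection-counting step, merely make explicit what the paper leaves implicit.
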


\subsubsection{Putting it together}
In the previous section we showed how to construct a two-level data structure that uses a modified segment tree as the primary tree, and a set of associated data structures for all the internal nodes in the primary tree. The primary tree requires $O(n \log n)$ space, and the complexity of the associated data structures is dominated by the Type C nodes, that require $O((n/\varepsilon^3)^{4/3} \log^{(w+5)/3} (n/\varepsilon))$ time and $O((n/\varepsilon^3)^{4/3} / \log^{(2w+1)/3} (n/\varepsilon))$ space to construct. Given a query square $S\in \mathcal{S}$ a value $M$ is returned in $O(\frac{1}{\varepsilon^2}\cdot\log^2 n)$ time such that $\Upsilon(S) \leq M \leq \Upsilon(S^+).$
%$$\Upsilon(S) \leq M \leq \Upsilon(S^+).$$

According to Lemma~\ref{lemma:linear-number-of-squares} there exists a square $S\in \mathcal{S}$ that has a packedness value that is within a factor of $(6+\varepsilon/8)$ smaller than the maximum packedness value of $\pi$. Using the data structure described in this section we get a $(6+\varepsilon/8)(1+\varepsilon/8)$-approximation. Since $\varepsilon$ is assumed to be at most 1, we finally get Theorem~\ref{theorem:complexAlgorithmThm}.
\newpage

\section{Concluding remarks}
In this paper we gave two approximation algorithm for the packedness value of a polygonal curve. The obvious question is if one can get a fast and practical $(1+\varepsilon)$-approximation. 

We also computed approximate packedness values for 16 real-world data sets, and the experiments indicate that the notion of $c$-packedness is a useful realistic input model for curves and trajectories. 

\bibliography{c_packed_ISAAC}

\end{document}